\newcommand{\Iref}[1]{(\hyperref[I#1]{I#1})\xspace}
\newcommand{\Irefall}{(\hyperref[I1]{I1--I3})\xspace}
\DeclareMathOperator{\dep}{depint}
\newcommand{\depint}[1]{\ensuremath{U_0V_0[#1]}}
\newcommand{\factor}[1]{\ensuremath{UV[#1]}}
\newcommand{\Abs}[2][]{\ensuremath{\|#2\|_{#1}}\xspace}
\newcommand{\abs}[1]{\ensuremath{|#1|}\xspace}
\newcommand{\makeset}[2]{\ensuremath{ \{ #1 \: | \: #2 \} }}
\newcommand{\algpairc}{\algofont{PairComp}\xspace}
\newcommand{\algsolveeqlin}{\algofont{LinWordEqSat}\xspace}
\newcommand{\algblocksc}{\algofont{BlockComp}\xspace}
\newcommand{\algofont}[1]{\textnormal{\textsc \selectfont\sffamily  #1}}
\newcommand{\sol}[1]{\ensuremath{S(#1)}}
\newcommand{\solution}{\ensuremath{S}\xspace}
\newcommand{\letters}{\ensuremath{\Gamma}\xspace}
\newcommand{\variables}{\ensuremath{\mathcal X }\xspace}
\newcommand{\NPclass}{{\sf NP}}
\newcommand{\PSPACE}{{\sf PSPACE}}
\newcommand{\poly}{{\sf {poly}}}
\newtheorem{lemma}{Lemma}
\newtheorem{theorem}{Theorem}
\providecommand{\Ocomp}{\mathcal{O}}
\newcommand{\twodots}{\mathinner{\ldotp\ldotp}}
\newcommand{\pos}{\ensuremath{\text{Pos}}}
\newcommand{\suppos}{\ensuremath{\text{Pos}_{\supseteq}}}
\newcommand{\subpos}{\ensuremath{\text{Pos}_{\subseteq}}}
\newtheorem*{claim}{Claim}
\definecolor{myYellow}{rgb}{0.9,0.9,0}
\title{Word equations in nondeterministic linear space\footnote{This work was supported under National Science Centre, Poland project number 2014/15/B/ST6/00615.}}
\author{Artur Je\.z\\
Institute of Computer Science, University of Wroc{\l}aw,
	Poland}
\begin{document}
\maketitle

\begin{abstract}
Satisfiability of word equations is an important problem in the intersection of formal languages and algebra:
Given two sequences consisting of letters and variables
we are to decide whether there is a substitution for the variables that turns this equation
into true equality of strings.
The exact computational complexity of this problem remains unknown,
with the best lower and upper bounds being, respectively, \NPclass{} and \PSPACE.
Recently, the novel technique of recompression was applied to this problem,
simplifying the known proofs and lowering the space complexity to (nondeterministic) $\Ocomp(n \log n)$.
In this paper we show that satisfiability of word equations is in nondeterministic linear space,
thus the language of satisfiable word equations is context-sensitive,
and by the famous Immerman–Szelepcsényi theorem:
the language of unsatisfiable word equations is also context-sensitive.
We use the known recompression-based algorithm
and additionally employ Huffman coding for letters.
The proof, however, uses analysis of how the fragments of the equation
depend on each other as well as a~new strategy for nondeterministic choices
of the algorithm, which uses several new ideas to limit the space occupied by the letters.

\vspace{6pt}

\noindent\textbf{keywords} {Word equations, string unification, context-sensitive languages, space efficient computations, linear space}
\end{abstract}

\section{Introduction}
Solving \emph{word equations} was an intriguing problem since the dawn of computer science,
motivated first by its ties to Hilbert's 10th problem.
Initially it was conjectured that this problem is undecidable,
which was disproved in a seminal work of Makanin~\cite{makanin}.
At first little attention was given to computational complexity
of Makanin's algorithm and the problem itself;
these questions were reinvestigated in the '90~\cite{Jaffar90,Schulz90,Koscielski},
culminating in the {\sf EXPSPACE} implementation of Makanin's algorithm by Guti{\'e}rrez~\cite{Gutierrez98}.

The connection to compression was first observed by Plandowski~\cite{PlandowskiICALP},
who showed that a~length-minimal solution of size $N$ has a compressed representation of size $\poly(n,\log N)$,
yet the proposed algorithm still used the bound on the size of the smallest solution following from Makanin's
algorithm.
Plandowski further explored this approach~\cite{PlandowskiFOCS} and proposed a {\sf PSPACE} algorithm~\cite{PlandowskiSTOC},
which is the best computational complexity class upper bound up to date; a simpler {\sf PSPACE} solution also based on compression
was proposed by Jeż~\cite{wordequations}.
On the other hand, this problem is only known to be \NPclass-hard,
and it is conjectured that it is in {\sf NP}.

The importance of these mentioned algorithms lays also with the possibility to extend them (in nontrivial ways)
to various scenarios:
free groups~\cite{mak82,Diekertfreegroups,wordeqgroups},
representation of all solutions~\cite{PlandowskiSTOC2,wordequations,raz87},
traces~\cite{mat97lfcs,eqtraces},
graph groups~\cite{DiekertLohrey08},
terms~\cite{contextunification},
context free groups~\cite{twistedwordequations}, hyperbolic groups~\cite{hyperbolic_groups_old,hyperbolic_groups_new},
and others.

While the computational complexity of word equations remains unknown,
its exact space complexity is intriguing as well:
Makanin's algorithm uses exponential space~\cite{Gutierrez98},
Plandowski~\cite{PlandowskiSTOC} gave no explicit bound on the space usage of his algorithm,
a rough estimation is ${\sf NSPACE}(n^5)$,
the recent solution of Jeż~\cite{wordequations} yields ${\sf NSPACE}(n \log n)$.
Moreover, for $\Ocomp(1)$ variables a linear bound on space complexity was
shown~\cite{wordequations};
recall that languages recognisable in nondeterministic linear space
are exactly the context-sensitive languages.

In this paper we show that satisfiability of word equations can be tested in nondeterministic linear space
in terms of the number of bits of the input,
thus showing that the language of satisfiable word equations
is context-sensitive
(and by the famous Immerman–Szelepcsényi theorem:
the language of unsatisfiable word equations).
The employed algorithm is a (variant of) algorithm of Jeż~\cite{wordequations},
which additionally uses Huffman coding for letters in the equation.
On the other hand, the actual proof uses a different encoding of letters,
which extends the ideas used in a (much simpler) proof in case of $\Ocomp(1)$ variables~\cite[Section~5]{wordequations},
i.e.\ we encode the letters in the equation using factors of the original equations on which those letters ``depend''.
The exact notion of this ``dependence'' is defined and analysed..
The other new ingredient is a different strategy of compression:
roughly speaking, previously a strategy that minimised the length of the equation was used.
Here, a more refined strategy is used:
it simultaneously minimises the size of a particular bit encoding,
enforces that changes in the equation (during the algorithm)
are local,
and limits the amount of new letters that are introduced to the equation.

The bound holds when letters and variables in the input are encoded using an arbitrary prefix code,
in particular, the Huffman coding (so the most efficient one among the prefix codes) is allowed.

A conference version of this paper was presented at ICALP 2017.
The journal version contains omitted proofs and contains improvements on notation 
and presentation as well as fixes some minor errors.

\section{Notions}
\subsection{Word equations}
A word equation is a pair $(U, V)$, written as $U = V$,
where $U, V \in (\letters \cup \variables)^*$
and $\letters$ and $\variables$ are disjoint alphabets of \emph{letters} and \emph{variables},
both are collectively called \emph{symbols}.
By $n_X$ we denote the number of occurrences of $X$ in the (current) equation;
in the algorithm $n_X$ does not change till $X$ is removed from the equation,
in which case $n_X$ becomes $0$.
A~\emph{substitution} is a morphism
$\solution : \variables \cup \letters \to \letters'^*$,
where $\letters' \supseteq \letters$ and $\sol a = a$ for every $a \in \letters$,
a~substitution naturally extends to $(\variables \cup \letters)^*$.
A \emph{solution} of an equation $U = V$ is a substitution \solution{} such that
\sol U = \sol V;
given a solution \solution of an equation $U = V$
we call \sol U the \emph{solution word}.
We allow the solution to use letters that are not present in the equation,
this does not change the satisfiability:
all such letters can be changed to a fixed letter from \letters (or to $\epsilon$),
and the obtained substitution is still a solution.
Yet, the proofs become easier,
when we allow the usage of such letters.
The alphabet $\letters'$ is usually given implicitly:
as the set of letters used by the substitution. 
A \emph{substring} denotes a sequence of letters,
while a \emph{factor}: a sequence of letters and variables;
in both cases, usually the ones occurring in the equation.
A \emph{block} is a string $a^\ell$ with $\ell \geq 1$ that cannot be extended to the left nor to the right with $a$.

As we deal with linear-space, the encoding used by the input equation matters.
We assume only that the input is given by a fixed (uniquely decodable) coding,
i.e.\ each symbol in the input is always given by the same bitstring
and given a bitstrings there is at most one string of letters and variables that is encoded as this bitstring.
It is folklore that among such codes the Huffman code yields the smallest space consumption (counted in bits)
and moreover the Huffman coding can be efficiently computed,
also in linear space.
As we focus on space counted in bits and use encodings, by $\Abs{\alpha}$
we denote the space consumption of the encoding of $\alpha$,
the encoding shall be always clear from the context.
Furthermore, whenever we talk about space complexity, it is counted in bits.

For technical reasons we insert into the equation ending markers at the beginning and end of $U$ and $V$,
i.e.\ write them as $@ U @ , @ V @$ for some special symbol $@$.
Those markers are ignored by the algorithm,
yet they are needed for the encoding.

\subsection{Nondeterministic Linear Space}
We recall some basic facts about the nondeterministic space-bounded computation.
A~nondeterministic procedure is \emph{sound},
when given a unsatisfiable word equation $U = V$
it cannot transform it to a satisfiable one, regardless of the nondeterministic choices;
a~procedure is \emph{complete},
if given a satisfiable equation $U = V$ 
for some nondeterministic choices it returns a satisfiable equation $U' = V'$.
A composition of sound (complete) procedures is sound (complete, respectively).
It is enough that we show linear-space bound for one particular computation:
as the bound is known, we limit the space available to the algorithm and reject the computations exceeding it.
Thus we imagine our algorithm
as if it had extra knowledge that allows it to make the nondeterministic choices appropriately
and we bound the space only in the case of those appropriate
nondeterministic choices.
In particular, the subprocedures described later on
are written `as if' the algorithm knew a particular solution of the current equation.

\section{The (known) algorithm}
We use (a variant of) recompression algorithm~\cite{wordequations},
the proofs of correctness are omitted, yet they should be intuitively clear.
The algorithm conceptually applies the following two operations on \sol U and \sol V:
given a string $w$ and alphabet $\letters$
\begin{itemize}
\item the $\letters$ block compression of $w$ is a string $w'$ obtained by
replacing every block $a^\ell$ in $w$, where $a \in \letters$ and $\ell \geq 2$, with a fresh letter $a_\ell$;
\item the $(\letters_\ell, \letters_r)$ pair compression of $w$,
where $\letters_\ell, \letters_r$ is a partition of \letters,
is a string $w'$ obtained by
replacing every occurrence of a pair $ab \in \letters_\ell\letters_r$
with a fresh letter $c_{ab}$.
\end{itemize}
A \emph{fresh} letter means that it is not currently used in the equation,
nor in \letters,
yet each occurrence of a fixed $ab$ is replaced with the same letter.
The $a_\ell$ and $c_{ab}$ are just notation conventions,
the actual letters in $w'$ do not store the information
how they were obtained.
For shortness, we call $\letters$ block compression
the $\letters$ compression or block compression, when $\letters$ is clear from the context;
similar convention applies to $(\letters_\ell, \letters_r)$ pair compression,
called $(\letters_\ell, \letters_r)$ compression
or pair compression, when $(\letters_\ell, \letters_r)$
is clear from the context.
We say that a pair
$ab \in \letters_\ell \letters_r$ is \emph{covered}
by a partition $\letters_\ell, \letters_r$.

The intuition is that the algorithm aims at performing those compression operations on the solution word
and to this end it modifies the equation a bit
and then performs the compression operations on $U$ and $V$
(and conceptually also on the solution, i.e.\ on \sol X for each variable $X$). Below we describe, how it is performed on the equation.

\algblocksc: For the equation $U = V$ and the alphabet \letters{} of letters in this equation
(except the ending markers)
for each variable $X$ we first guess the first and last letter of \sol X
as well as the lengths $\ell, r$ of the longest prefix consisting only of $a$ for some $a \in \Gamma$,
called $a$-prefix, and $b$-suffix (defined similarly for some $b \in \Gamma$,) of \sol X.
Then we replace $X$ with $a^\ell X b^r$ (or $a^\ell b^r$ or $a^\ell$ when $\sol X = a^\ell b^r$ or $\sol X = a^\ell$);
this operation is called \emph{popping $a$-prefix and $b$-suffix}.
Then we perform the $\letters$-block compression on the equation
(this is well defined, as we can treat variables as symbols from outside $\letters$).

\begin{algorithm}[h]
	\caption{$\algblocksc(\letters)$}
	\label{alg:blocksc}
	\begin{algorithmic}[1]
	\Require $\letters$ is the set of letters in $U = V$
  \For{$X \in \variables$}
	\State let $a$, $b$ be the first and last letter of \sol X
	\State guess $\ell \geq 1$, $r \geq 0$ \label{guess ell}\Comment{$\sol X = a^{\ell} w b^{r}$, where $w$ does not begin with $a$ nor end with $b$}
	\State \Comment{If $\sol X = a^{\ell}$ then $r=0$}
	\State replace each $X$ in $U$ and $V$ by  $a^{\ell} X b^{r}$
	\Comment{$\sol X = a^{\ell} w b^{r}$ changes to $\sol X = w$}
	\If{$\sol X = \epsilon$} 
		remove $X$ from $U$ and $V$ \Comment{Guess}
	\EndIf
  \EndFor
	\For{each letter $a \in \letters$ and each $\ell\geq2$} \label{loop of compressions}
		\State replace every block $a^\ell$ in $U$ and $V$
		by a fresh letter $a_\ell$
	\EndFor
	\end{algorithmic}
\end{algorithm}

\algpairc:
For the alphabet \letters,
which will always be the alphabet of letters in the equation right before the block compression
(again: except the ending markers),
we partition  \letters into $\letters_\ell$ and $\letters_r$
(in a way described in Section~\ref{sec:pair compression-strategy})
and then for each variable $X$ guess whether \sol X begins 
with a letter $b \in \letters_r$ 
and if so, replace $X$ with $b X$ or $b$, when $\sol X = b$,
and then do a symmetric action for the last letter and $\letters_\ell$;
this operation is later referred to as \emph{popping letters}.
Then we perform the $(\letters_\ell, \letters_r)$ compression on the equation.

\begin{algorithm}[h]
  \caption{$\algpairc(\letters_\ell,\letters_r)$\label{alg:paircompc}}
  \begin{algorithmic}[1]
	\Require $\letters_\ell$, $\letters_r$ are disjoint, $\Gamma = \Gamma_\ell \cup \Gamma_r$
	\For{$X \in \variables$} \label{pop main loop}
		\State let $b$ be the first letter of \sol X  \label{guess first letter}\Comment{Guess}
		\If{$b \in \letters_r$} 
			\State replace each $X$ in $U$ and $V$ by  $bX$ \label{leftpop}
			\Comment{Implicitly change $\sol X = bw$ to $\sol X = w$}
			\If{$\sol X = \epsilon$} remove $X$ from $U$ and $V$ \Comment{Guess}
			\EndIf
		\EndIf
		\State let $a$ be the \ldots  \Comment{Perform a symmetric action for the last letter}
	\EndFor
	\For{$ab \in \letters_\ell\letters_r$} \label{loop lefta}
	  	\State replace each substring $ab$ in $U$ and $V$ by a fresh letter $c$
	\EndFor
  \end{algorithmic}
\end{algorithm}

\algsolveeqlin works in \emph{phases}, until an equation
with both sides of length $1$ is obtained:
in a single phase it establishes the alphabet $\letters$
of letters in the equation, performs the \letters compression
and then repeats: guess the partition of \letters to $\letters_\ell$ and $\letters_r$
and perform the $(\letters_\ell,\letters_r)$ compression,
until each pair $ab \in \letters^2$ was covered by some partition.

\begin{algorithm}[h]
	\caption{\algsolveeqlin}
	\label{alg:main}
	\begin{algorithmic}[1]
	\While{$|U| > 1$ or $|V| > 1$} 
		\State $\letters \gets$ letters in $U = V$ 
  		\State $\algblocksc(\letters)$
		\While{some pair in $\letters^2$ was not covered}
			\State partition $\letters$ to $\letters_\ell$ and $\letters_r$ \Comment{Guess}			
			\State $\algpairc(\letters_\ell,\letters_r)$
		\EndWhile
	\EndWhile
 \end{algorithmic}
\end{algorithm}

\subparagraph*{Correctness}
Given a solution \solution we say that some nondeterministic choices
\emph{correspond to} \solution, if they are done as if \algsolveeqlin
knew \solution. For instance, it guesses correctly the first letter of \sol X
or whether $\sol X = \epsilon$.
(The choice of a partition does not fall under this category.)

All of our procedures are sound and complete, furthermore
they transform the solutions in the sense described in the below Lemma~\ref{lem:sound and complete}.

\begin{lemma}[{\cite[Lemma 2.8 and Lemma 2.10]{wordequations}}]
\label{lem:sound and complete}
$\algblocksc$ is sound and complete;
to be more precise, for any solution \solution{} of an equation
$U = V$ for the nondeterministic choices corresponding to \solution the returned equation
$U' = V'$ has a solution $\solution'$ such that
$\solution'(U')$ is the \letters compression of \sol U
and $\solution'(X)$ is obtained from \sol X by removing the $a$-prefix and $b$-suffix,
where $a$ is the first letter of \sol X and $b$ the last,
and then performing the \letters compression.

When $\letters_\ell$ and $\letters_r$ are disjoint,
the $\algpairc(\letters_\ell,\letters_r)$ is sound and complete;
to be more precise, for any solution \solution{} of an equation
$U = V$ for the nondeterministic choices corresponding to \solution the returned equation
$U' = V'$ has a solution $\solution'$ such that
$\solution'(U')$ is the $(\letters_\ell,\letters_r)$ compression of \sol U
and $\solution'(X)$ is obtained from \sol X by removing the first letter of \sol  X,
if it is in $\letters_r$, and the last, if it is in $\letters_\ell$,
and then performing the $(\letters_\ell,\letters_r)$ compression.
\end{lemma}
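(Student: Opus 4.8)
The plan is to factor each procedure into an outer \emph{popping} layer and an inner \emph{compression} layer and to deal with the two layers separately, since \algblocksc and \algpairc have the same shape. Write $U_1 = V_1$ for the equation produced from $U = V$ by all the popping steps (each surviving $X$ replaced by $a^\ell X b^r$, resp.\ by $bX$ and $Xa$) but \emph{before} the final compression loop, so that $U' = V'$ is precisely the compression of $U_1 = V_1$ with variables regarded as symbols outside the compressed alphabet. The single genuine ingredient I would isolate is a \emph{commutation claim}: if $\solution_1$ is a solution of $U_1 = V_1$ such that no block (resp.\ no covered pair from $\letters_\ell\letters_r$) of $\solution_1(U_1)$ crosses the boundary between an explicit letter of $U_1$ and the image of a variable, nor the boundary between the images of two adjacent variable occurrences, then the substitution $\solution'$ that equals the compression of $\solution_1(X)$ on each $X$ and the identity on letters satisfies $\solution'(U') = $ compression of $\solution_1(U_1)$ (and likewise for $V$). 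I would prove this by cutting $\solution_1(U_1)$ at the variable boundaries: the hypothesis forces each maximal explicit run of $U_1$ to have the same block/pair structure whether read inside $\solution_1(U_1)$ or on its own, so the compression of $\solution_1(U_1)$ is the concatenation of the compressions of these runs and of the words $\solution_1(X)$, which is exactly what $\solution'$ builds out of $U'$.

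\textbf{Soundness} has to be argued for \emph{every} run of the nondeterministic choices. Given a solution $\solution'$ of $U' = V'$, I would take the morphism $\pi$ fixing all variables and all letters except the fresh ones created by the compression loop, and mapping each such fresh letter back to the word it replaced ($a_\ell \mapsto a^\ell$, resp.\ $c_{ab} \mapsto ab$); then $\pi(U') = U_1$ and $\pi(V') = V_1$. Setting $\solution_1(X) := \pi(\solution'(X))$ (and $\epsilon$ for a variable deleted while popping), one checks $\solution_1 \circ \pi = \pi \circ \solution'$, so $\solution_1(U_1) = \pi(\solution'(U')) = \pi(\solution'(V')) = \solution_1(V_1)$ and $\solution_1$ solves $U_1 = V_1$. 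Undoing the popping, $\solution(X) := a^\ell \solution_1(X) b^r$ (resp.\ $\solution(X) := b\,\solution_1(X)\,a$, with the obvious reading when popping replaced $X$ by a letter-only word, so $X$ does not occur in $U'$), gives $\solution(U) = \solution_1(U_1) = \solution_1(V_1) = \solution(V)$. This used nothing about the guesses being ``correct'', which is exactly what soundness demands.

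\textbf{Completeness} starts from a solution $\solution$ of $U = V$, with the algorithm making the choices corresponding to $\solution$. Put $\solution_1(X) := \solution(X)$ with its $a$-prefix and $b$-suffix deleted (resp.\ with its first letter deleted iff that letter lies in $\letters_r$ and its last deleted iff it lies in $\letters_\ell$); since $\solution(X)$ is recovered by reattaching exactly the popped letters, $\solution_1$ solves $U_1 = V_1$ and $\solution_1(U_1) = \solution(U)$. Popping is what validates the hypothesis of the commutation claim: in \algblocksc, a surviving $X$ has both popped letters as its neighbours in $U_1$ (for the correct choices $\ell, r \geq 1$ whenever $X$ survives), its left neighbour being the old first letter $a$ of $\solution(X)$ while the nonempty $\solution_1(X)$ begins with a different letter (symmetrically on the right), so no block of $\solution_1(U_1)$ crosses a variable boundary; in \algpairc the same follows from the disjointness of $\letters_\ell$ and $\letters_r$ together with the popping rule (a popped left letter lies in $\letters_r$ and a surviving first letter of $\solution_1(X)$ lies outside $\letters_r$, so the boundary pair can never be a covered $\letters_\ell\letters_r$ pair; dually on the right and at a junction of two variable occurrences). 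The commutation claim then supplies a substitution $\solution'$ with $\solution'(X) = $ compression of $\solution_1(X)$ and $\solution'(U') = \solution'(V') = $ compression of $\solution_1(U_1) = $ compression of $\solution(U)$, which is exactly the asserted description.

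The \textbf{hard part} is the commutation claim, and within it the verification --- through the popping step --- of the no-crossing hypothesis: I would have to run through every type of boundary (a variable abutting explicit letters, two adjacent variable occurrences, a variable whose image becomes empty, and the degenerate shapes $\solution(X) = a^\ell$ and $\solution(X) = \epsilon$ where $X$ is deleted rather than split) and confirm that the guessed prefixes and suffixes really do insulate the variable images from one another and from the surrounding explicit text. The other point to keep straight is the asymmetry already visible: soundness must cover all runs (handled uniformly by the decompression morphism $\pi$), whereas completeness needs only the single run whose guesses match $\solution$.
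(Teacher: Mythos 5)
Your proposal is correct and follows essentially the same route as the proof this paper imports from \cite[Lemmas 2.8 and 2.10]{wordequations}: soundness via a decompression morphism that re-expands fresh letters and re-attaches popped prefixes/suffixes (valid for arbitrary nondeterministic choices), and completeness via the observation that popping eliminates all ``crossing'' blocks and covered pairs at variable boundaries, so that compression commutes with the substitution. The only point to watch is your phrasing that ``a surviving first letter of $\solution_1(X)$ lies outside $\letters_r$'' --- after a pop the new first letter may well lie in $\letters_r$, and the boundary is then saved by the popped letter to its left being in $\letters_r$ and hence outside $\letters_\ell$; your case analysis already contains this, so it is a matter of presentation, not a gap.
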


The solution $\solution'$ from Lemma~\ref{lem:sound and complete}
is called a \emph{solution corresponding to} \solution after
$(\letters_\ell,\letters_r)$ compression (\letters compression, respectively);
we also talk about a \emph{solution corresponding to }\solution,
when the compression operation is clear from the context
and extend this notion to a~solution corresponding to \solution after a phase.
What is important later on is how $\solution'$ is obtained from \solution:
it is modified as if the subprocedures knew first/last letter of \sol X
and popped appropriate letters from the variables
and then compressed pairs/blocks in substitution for variables.
The nondeterministic choices in the compression operations
that yield the corresponding solution $\solution'$ are called
\emph{corresponding} as well and they are intuitively clear:
all of them are described in Algorithms~\ref{alg:blocksc}--\ref{alg:main} `as if' the solution
was given explicitly (for instance, we replace $X$ with $a^\ell X b^r$ when \sol X has an $a$-prefix $a^\ell$ and $b$-suffix $b^r$).

Lemma~\ref{lem:sound and complete} yields the soundness and completeness of \algsolveeqlin,
for the termination we observe that iterating the compression operations shortens the string by a~constant fraction,
thus the length of a solution word shortens by 
a constant fraction in each phase.

\begin{lemma}
\label{lem:shortening the strings}
Let $w$ be a string over an alphabet $\letters$
and $w'$ a string obtained from $w$ by a~\letters compression
followed by a sequence of $(\letters_\ell,\letters_r)$ compressions
(where $\letters_\ell,\letters_r$ is a partition of \letters)
such that each pair $ab \in \letters^2$ is covered by some partition.
Then $|w'| \leq \frac{2|w|+1}{3}$.
\end{lemma}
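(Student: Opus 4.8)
The plan is to account for $|w'|$ by tracing each of its letters back to the positions of $w$ from which it was built. The first thing I would record is that both compression operations only \emph{merge} adjacent symbols — a block $a^\ell$ collapses to one letter, an adjacent pair $ab$ collapses to one letter — and neither inserts anything. Hence each letter of $w'$ corresponds to a nonempty \emph{contiguous} interval of positions of $w$; these intervals partition $\{1,\dots,|w|\}$ and appear in $w'$ in the left-to-right order of $w$. So I may think of $w'$ as an ordered partition $I_1,\dots,I_k$ of $\{1,\dots,|w|\}$ into intervals, where $k=|w'|$, and the task becomes bounding $k$. Call an interval a \emph{singleton} if it has size $1$; a fresh letter produced by the \letters-compression always represents a block of length $\ge 2$, so its interval is never a singleton.

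The heart of the argument is the claim that \emph{no two consecutive intervals $I_j,I_{j+1}$ are both singletons}. Arguing by contradiction, suppose $I_j=\{p\}$ and $I_{j+1}=\{p+1\}$, so positions $p$ and $p+1$ are never merged with anything throughout the run. Then neither is touched by the \letters-compression, so $w_p$ and $w_{p+1}$ are blocks of length $1$ in $w$; in particular $w_p\neq w_{p+1}$ (otherwise they would lie in a common block of length $\ge 2$). Here I would also note that the hypothesis ``every pair of $\letters^2$ is covered'' is only substantive for distinct pairs, since a partition can never cover $aa$; but after the \letters-compression equal letters are never adjacent, so this is harmless. Since $\{p\}$ and $\{p+1\}$ never merge, at every moment of the pair-compression stage the symbol at $I_j$ is still $a:=w_p\in\letters$, the symbol at $I_{j+1}$ is still $b:=w_{p+1}\in\letters$, and they remain adjacent in the current string (any interval between them would contain a position strictly between $p$ and $p+1$, of which there are none). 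Now pick a partition $(\letters_\ell,\letters_r)$ covering $ab$, so $a\in\letters_\ell$ and $b\in\letters_r$; when the $(\letters_\ell,\letters_r)$-compression is performed, this occurrence of $ab$ must be among the replaced pairs — a short case analysis on the neighbour of $a$ shows it cannot be consumed as the \emph{right} half of a compressed pair (that would require $a\in\letters_r$, contradicting $\letters_\ell\cap\letters_r=\emptyset$), so $a$ is free to be the left half of the pair $ab$. But then $p$ and $p+1$ are merged, a contradiction.

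With the claim in hand, the rest is a short count. If $s$ of the $k$ intervals are singletons then, since they form an independent set among $I_1,\dots,I_k$ viewed as a path, $s\le\lceil k/2\rceil$; and since every non-singleton interval has size at least $2$,
\[
|w| \;=\; \sum_{j=1}^{k}|I_j| \;\ge\; s+2(k-s) \;=\; 2k-s \;\ge\; 2k-\Big\lceil\tfrac{k}{2}\Big\rceil \;\ge\; \frac{3k-1}{2},
\]
which rearranges to $|w'|=k\le\frac{2|w|+1}{3}$.

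I expect the only delicate point to be the middle step: making rigorous that the occurrence of $ab$ at positions $p,p+1$ really is replaced by the $(\letters_\ell,\letters_r)$-compression, rather than being ``blocked'' by an overlapping occurrence of some other compressed pair to its left. This is exactly where disjointness of $\letters_\ell$ and $\letters_r$ is essential — a letter of $\letters_\ell$ can never also serve as the right component of a compressed pair — and a one-line case split on the left neighbour of $a$ closes it. Everything else, namely the interval bookkeeping and the final arithmetic (including the edge cases $|w|\le 1$), is routine.
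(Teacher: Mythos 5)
Your proof is correct and is essentially the paper's argument: the central claim that no two adjacent positions of $w$ both survive unmerged is exactly the paper's ``of any two consecutive letters at least one is compressed,'' proved by the same dichotomy ($a=b$ handled by block compression, $a\neq b$ by the covering partition, with disjointness of $\letters_\ell,\letters_r$ ruling out overlaps). Your interval bookkeeping and the $2k-\lceil k/2\rceil$ count are just a more explicit rendering of the paper's association of compressed letters to uncompressed ones, yielding the same $\frac{2|w|+1}{3}$ bound.
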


\begin{proof}
Consider two consecutive letters $a$, $b$ in $w$.
At least one of those letters is compressed during the procedure:
\begin{itemize}
	\item \emph{if $a = b$:} In this case they are compressed
	during the \letters compression.
	\item \emph{$a \neq b$:}
	At some point the pair $ab$ is covered by some $(\letters_\ell,\letters_r)$ compression.
	If any of letters $a,b$ was already compressed then we are done.
	Otherwise, this occurrence of $ab$ is now compressed.
\end{itemize}

Hence each uncompressed letter in $w$ (except perhaps the last letter) can be associated with the two letters
to the right that are compressed.
This means that (in a phase) at least $\frac{2}{3}(|w|-1)$ letters are compressed
and so $\abs{w'} \leq \abs{w} - \frac{1}{3}(|w|-1)$, as claimed.
\end{proof}

\begin{theorem}
	\label{thm:linspace}
\algsolveeqlin is sound, complete and terminates (for appropriate nondeterministic choices) for satisfiable equations.
It runs in linear (bit) space.
\end{theorem}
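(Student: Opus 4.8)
The plan is to prove soundness, completeness, and termination first (these follow quickly from the lemmas already stated), and then devote the bulk of the argument to the linear space bound, which is the real content. For soundness and completeness: \algsolveeqlin is a composition of \algblocksc and \algpairc invocations, each of which is sound and complete by Lemma~\ref{lem:sound and complete}, and composition preserves both properties; for termination on a satisfiable equation, fix a length-minimal solution \solution and always make the nondeterministic choices corresponding to it. By Lemma~\ref{lem:sound and complete} the length-minimal solution after one phase has length at most the image under one \letters-compression followed by a covering sequence of pair compressions of \sol U, so by Lemma~\ref{lem:shortening the strings} it shrinks by a constant fraction (down to $\frac{2|w|+1}{3}$); hence after $\Ocomp(\log|\sol U|)$ phases the length-minimal solution has length $1$, at which point the equation itself must have both sides of length $1$ and the loop terminates.

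The main obstacle is the linear bitspace bound, and here the naive counting does not work: over a phase the number of distinct letters can grow (block compression introduces one new letter per distinct block length, pair compression one per covered pair), and each fresh letter may need more bits than an original one, so bounding $|U|+|V|$ by $\Ocomp(n)$ symbols is not enough — we must bound the total number of \emph{bits}. I would follow the strategy signposted in the introduction and in \cite[Section~5]{wordequations}: maintain an encoding of the letters (conceptually a Huffman-style or balanced code over the multiset of letter occurrences currently in the equation) so that $\Abs{U}+\Abs{V}$, the encoded size, is what we control, and show it stays $\Ocomp(n)$ where $n=\Abs{U_0}+\Abs{V_0}$ is the input bitsize. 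The key steps: (i) argue that at the start of each phase the equation has $\Ocomp(1)$ times as many symbols as a length-minimal solution would compress to, so the symbol count is geometrically controlled relative to the input; (ii) show that popping prefixes/suffixes and popping letters adds only $\Ocomp(n_X)$ symbols per variable $X$ and hence $\Ocomp(\text{current size})$ in total, using that $\sum_X n_X \le |U|+|V|$; (iii) the crucial part — bound the number of \emph{fresh} letters and the number of bits needed to encode them. For block compression the fresh letters $a_\ell$ can be reused/encoded compactly because the lengths $\ell$ occurring are constrained; for pair compression one uses that the partition is chosen by a careful strategy (Section~\ref{sec:pair-compression-strategy}) so that the compression both shortens the equation and keeps the letter population small.

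The heart of the difficulty, as the abstract admits, is the interplay: one needs a single measure — a particular bit encoding of the equation — that is simultaneously (a) within a constant factor of $n$, (b) decreased (or at worst kept stable) by each compression step after accounting for the symbols popped in, and (c) compatible with a nondeterministic strategy for choosing the partitions $(\letters_\ell,\letters_r)$ that can actually be carried out in the allotted space. I would set up this potential/weight function explicitly: assign to each letter a weight reflecting its code length, show popped letters inherit bounded weight from the variable occurrences they replace, show a covering round of pair compressions plus the block compression reduces the weighted length by a constant fraction (mirroring Lemma~\ref{lem:shortening the strings} but in the weighted setting), and show the fresh letters introduced in a phase can be assigned weights summing to at most a constant fraction of what was removed. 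Combining the per-phase stability of this measure with its initial $\Ocomp(n)$ value, and observing that at every intermediate point within a phase the transient increase (from popping before compressing) is also $\Ocomp(n)$, yields that the whole computation runs in $\Ocomp(n)$ bits. Finally one checks the bookkeeping — storing $\letters$, the $n_X$ counters, the current partition, and the guessed prefix/suffix lengths $\ell,r$ (written in binary, so $\Ocomp(\log)$ bits each, charged against the $n_X$ occurrences they act on) — all fits within the same linear budget, which gives the claimed $\NLinSPACEclass$ bound.
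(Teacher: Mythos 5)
Your treatment of soundness, completeness and termination matches the paper: compose Lemma~\ref{lem:sound and complete} over all subprocedures, fix a solution over the current alphabet at the start of each phase, and use Lemma~\ref{lem:shortening the strings} to get $\Ocomp(\log N)$ phases. That part is fine.

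The space bound, however, is where the entire content of the theorem lies, and your proposal stops at the level of a plan: you say you ``would set up this potential/weight function explicitly,'' but you never construct it, and the construction is precisely what is hard. Two concrete gaps. First, the measure cannot be the Huffman code itself, because that code is recomputed after every step and does not admit a clean per-phase recurrence; the paper instead builds a surrogate encoding via \emph{dependency factors} --- each letter is tagged with the factor $D$ of the original equation it was derived from and encoded as $D\#i$ --- and must prove this is a well-defined uniquely decodable code (conditions \Dref{1}--\Dref{3}, Lemma~\ref{lem:basic depfactor occurrences}); the Huffman bound then follows only because Huffman is optimal among prefix codes. Second, your step (iii) asserts that ``a careful strategy'' keeps the fresh-letter population small, but the naive hope that each variable pops $\Ocomp(1)$ letters per phase and each depfactor extends by $\Ocomp(1)$ letters is \emph{false} for any fixed partition sequence; the paper's fix is the blocked/unblocked machinery of Lemma~\ref{lem:blocking} together with a round-robin strategy that halves four distinct weighted sums \eqref{eq:variable weight}--\eqref{eq:depfactor count}, and the resulting geometric series is what tames both the depfactor weights ($H_d$) and, via the convexity argument on $h(x)=x\log(x+1)$, the index sizes ($H_n$) in Lemma~\ref{lem:linear space}. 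Without an argument of this kind the ``constant fraction'' claims in your sketch are unsupported.

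A smaller but genuine error: you propose to store the guessed block lengths $\ell, r$ ``in binary, so $\Ocomp(\log)$ bits each, charged against the $n_X$ occurrences they act on.'' These lengths are bounded only by the exponent of periodicity, which is exponential in $n$, so a single length can cost $\Theta(n)$ bits and the total charge is quadratic. The paper avoids writing the lengths down at all: block compression is implemented symbolically via a system of linear Diophantine equations (Lemma~\ref{lem:block compression linear space}), and this is not optional for the linear bound.
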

The proof is given in Section~\ref{subsec:wrap-up}.

In the following, we will also need one more technical property of block compression.

\begin{lemma}
\label{lem:letters are different}
Consider a solution \solution during a phase with nondeterministic choices corresponding to \solution{} and the corresponding solution
$\solution'$ of $U' = V'$ after the block compression.
Then $\solution'(U')$ has no two consecutive letters $aa \in \letters$.
\end{lemma}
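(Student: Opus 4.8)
The plan is to reduce the statement to a purely structural property of the $\letters$-block compression, using Lemma~\ref{lem:sound and complete} to pin down exactly which word must be analysed. Since the non-deterministic choices correspond to \solution{}, that lemma guarantees that $\solution'(U')$ is precisely the $\letters$-compression of $w := \sol U$. Hence it suffices to prove the following claim: for every word $w$ over an alphabet containing $\letters$, the $\letters$-compression $w'$ of $w$ contains no factor $aa$ with $a \in \letters$. This reduction is the crucial move, because all the complications coming from popping $a$-prefixes and $b$-suffixes, from the induced change of $\sol X$, and from \solution{} possibly using letters outside $\letters$ are already absorbed into the formulation of Lemma~\ref{lem:sound and complete}; we only have to argue about a plain compression of a fixed word.

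To prove the claim, recall how the $\letters$-compression acts on $w$: it replaces every maximal run $a^\ell$ with $a \in \letters$ and $\ell \geq 2$ (i.e.\ every block of a letter of $\letters$ of length at least two) by a single fresh letter $a_\ell \notin \letters$, and leaves every other position of $w$ untouched. In particular, a letter $a \in \letters$ can appear in $w'$ only at a position originating from a block $a^1$ of $w$, that is, from an occurrence of $a$ in $w$ that is immediately preceded and followed (if at all) by symbols different from $a$.

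Now suppose towards a contradiction that $\solution'(U') = w'$ had a factor $aa$ with $a \in \letters$, occurring at two consecutive positions $i$ and $i+1$ of $w'$. By the previous paragraph, position $i$ of $w'$ comes from a length-$1$ block of $a$ in $w$, and so does position $i+1$; since $i$ and $i+1$ are consecutive in $w'$, the corresponding blocks in $w$ are consecutive occurrences of $a$. But two consecutive occurrences of the same letter form a run of length at least $2$, contradicting that each of them is a block of length exactly $1$. (Equivalently: if $w$ contained a factor $aa$ with $a \in \letters$, it would lie inside some maximal $a$-block of length $\geq 2$, which the compression replaces by a letter outside $\letters$, so no two adjacent copies of a letter of $\letters$ survive.) This contradiction proves the claim, hence the lemma.

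I do not expect a genuine obstacle here: the argument is essentially immediate once Lemma~\ref{lem:sound and complete} is invoked. The only point requiring a little care is to make sure we analyse the correct word — namely the solution word $\solution'(U')$, which by that lemma is the plain $\letters$-compression of $\sol U$ and not anything involving the popped variables — and to remember that $\sol U$ may legitimately contain letters from $\letters' \setminus \letters$, which is harmless since the lemma only forbids $aa$ for $a \in \letters$.
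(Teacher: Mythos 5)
Your argument correctly establishes the property for the solution obtained \emph{immediately} after \algblocksc: the $\letters$-compression of $\sol U$ indeed cannot contain a factor $aa$ with $a\in\letters$, since two adjacent surviving occurrences of $a$ would have had to come from two adjacent length-$1$ blocks of $a$ in $\sol U$, which is absurd. This matches the first half of the paper's (one-line) proof, in more detail.

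However, there is a gap in scope. The lemma quantifies over \emph{any} corresponding solution $\solution'$ after the block compression, i.e.\ also over the solutions obtained after each of the subsequent $(\letters_\ell,\letters_r)$-compressions within the phase --- and this is exactly how the lemma is used later (in the proof that a partition halving \eqref{eq:variable weight} exists, it is invoked for the first three letters of $\sol X$ at an arbitrary pair-compression round, long after \algblocksc{} has finished). Your reduction via Lemma~\ref{lem:sound and complete} pins $\solution'(U')$ down as the plain $\letters$-compression of $\sol U$ only for the equation directly returned by \algblocksc; for later equations in the phase, $\solution'(U')$ has undergone further pair compressions, and your claim says nothing about those. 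The missing (easy) step is the paper's second clause: each pair compression replaces a pair of symbols by a single \emph{fresh} letter, which by definition lies outside $\letters$; so no occurrence of a letter of $\letters$ is ever created after \algblocksc, and since a pair is replaced by one letter rather than deleted, two occurrences that were not adjacent cannot become adjacent. Hence the absence of factors $aa$ with $a\in\letters$, once established, persists through the rest of the phase. You should add this preservation argument to complete the proof.
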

This is true after block compression and afterwards no
letters from \letters are introduced.

\subparagraph*{Compressing blocks in small space}
Storing, even in a concise way, the lengths of popped prefixes and suffixes
in \letters compression makes attaining the linear space difficult.
This was already observed~\cite{wordequations} and a linear-space
implementation of \algblocksc is known~\cite{wordequations}.
It performs a different set of operations,
yet the effect is the same as for \algblocksc.
Instead of explicitly naming the lengths of blocks,
we treat them as integer parameters;
then we declare, which maximal blocks are of the same length
(those lengths depend linearly on the parameters);
verifying the validity of such a guess is done by writing a system
of (linear) Diophantine equations that formalise those equalities
and checking its satisfiability.
This procedure is described in detail in \cite[Section~4]{wordequations}.
In the end, it can be implemented in linear bitspace.

\begin{lemma}[{\cite[Lemma 4.7]{wordequations}}]
\label{lem:block compression linear space}
\algblocksc{} can be implemented in space linear in the bit-size of the equation
\end{lemma}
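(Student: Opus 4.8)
The only quantities in \algblocksc that cannot be stored in linear space are the prefix- and suffix-lengths $\ell,r$ guessed in line~\ref{guess ell}, since in a length-minimal solution they may be astronomically large. The plan is therefore never to write them down explicitly, but to treat them as symbolic integer parameters. For each variable $X$ I would introduce one parameter for the length of its guessed $a$-prefix and one for its $b$-suffix, and pop $a^{\ell}Xb^{r}$ only symbolically, recording the parameter in place of the explicit power. After this symbolic popping the equation is a sequence of explicit letters, parameter-labelled runs and variables, in which adjacent runs of one letter merge into maximal blocks; the length of each block is a linear form with $\{0,1\}$ coefficients in the parameters plus a constant counting the explicit letters it absorbs. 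There are at most $2\abs{\variables}=\Ocomp(n)$ parameters, at most $\Ocomp(n)$ blocks, the total number of parameter occurrences over all forms is $\Ocomp(n)$, and the constants sum to at most the number of explicit letters; hence the whole symbolic representation has bit-size linear in the equation.

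To realise the compression on this symbolic equation, note that which blocks receive the same fresh letter is determined solely by their base letter and their (unknown) length. I would therefore guess, for the blocks of each letter $a$, an equivalence relation declaring which ones share a length and which are actually compressed (length at least $2$), and assign one fresh letter per class. Validity of such a guess is checked by translating it into a system of linear Diophantine (in)equations over the parameters: an equality $\mathrm{len}(B)=\mathrm{len}(B')$ for every pair of blocks placed in a common class, the sign constraints forced by the guessed first/last letters (each popped length is $\geq 1$), and a $\mathrm{len}(B)\geq 2$ constraint for each compressed class. The guess is accepted exactly when this system has a solution over the non-negative integers, and the output equation is produced by substituting the class letters for the blocks. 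Since the correct equivalence, read off from the solution the algorithm pretends to know, is always satisfiable and reproduces the true block lengths, this procedure realises exactly the transformation of \algblocksc, so its soundness and completeness are inherited from Lemma~\ref{lem:sound and complete}.

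It remains to bound the working space. By the counts above, the symbolic equation, the blocks with their forms, the guessed equivalence, the Diophantine system and the output equation (which introduces at most one fresh letter per block) are all of linear bit-size. The genuinely delicate point, which I expect to be the main obstacle, is deciding satisfiability of the Diophantine system within linear space: a witnessing assignment of parameter values can again be astronomically large, so one cannot simply guess and verify one. The plan is to decide feasibility without ever materialising a solution, exploiting the very special shape of the system --- it is sparse, has coefficients in $\{-1,0,1\}$ and only $\Ocomp(n)$ parameters, and its equalities merely assert equalities between sums of prefix/suffix lengths. I would process these equalities by an elimination kept fraction-free and compact, so that no large integer is ever stored, reducing consistency together with the $\geq 1$ and $\geq 2$ constraints to a test carried out in linear space; forcing this elimination into the linear budget, rather than the naive $\Ocomp(n\log n)$ that a careless encoding of intermediate integers would cost, is the technical heart of the argument.
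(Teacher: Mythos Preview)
Your proposal reconstructs precisely the approach the paper sketches before stating the lemma: treat the popped prefix/suffix lengths as integer parameters, represent each maximal block's length as a linear form in these parameters, nondeterministically guess which blocks share a length, and validate the guess by checking satisfiability of the resulting system of linear Diophantine constraints. The paper does not prove this lemma here at all---it is imported verbatim from \cite[Lemma~4.7]{wordequations}, with the details explicitly deferred to \cite[Section~4]{wordequations}---so your write-up is already more detailed than what appears in this paper, and it matches the cited construction.

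The one place where you are honest about an unfinished step---deciding the Diophantine system in \emph{linear} rather than $\Ocomp(n\log n)$ bitspace---is exactly the technical content that the paper outsources to the cited reference. Your diagnosis of why this is nontrivial (a witnessing assignment may be too large to write down, so one must certify feasibility structurally) is correct, and the shape you identify (sparse, small coefficients, $\Ocomp(n)$ total parameter occurrences) is what makes the linear bound attainable. One minor inaccuracy: the coefficients in the block-length forms need not lie in $\{0,1\}$, since several occurrences of the same variable may contribute their popped runs to a single maximal block; what is true, and what suffices, is that the \emph{sum of all coefficients} across all forms is $\Ocomp(n)$, which is the bound you actually use.
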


\subparagraph*{Encoding the equation}
At each step of the algorithm we encode letters
(though not variables) in the equation using Huffman coding.
The variables are encoded using the original coding from the input equation.
To distinguish between the two codes,
we prefix each code for a letter with \texttt{0}
and for a variable with \texttt{1};
in this way we can also distinguish between the letters and variables.
This increases the space usage at most twice
when compared to the usage without this extra bit,
so we disregard it, as we aim for linear space (with an arbitrary constant).
Also, the space usage for variables is at most $2\Abs[0]{U_0V_0}$,
so we disregard it later on and focus on the space usage of the letters.

Using the Huffman coding of the letters
may mean that when going from $U = V$ to $U' = V'$
the encoding of letters changes and in fact using the former encoding in the latter equation may lead to super-linear space
(imagine that we pop from each variable a~letter that has a very long code).
Using standard methods changing the encoding 
during a transition from $U = V$ to $U' = V'$ 
can be done in bit-space linear in the bit-size of the input plus bit-size of the output, using perhaps different encodings.

\begin{lemma}
\label{lem:performing subprocedures in linear space}
Given a string (encoded using some uniquely decodable code),
its Huffman coding can be computed in linear bitspace.

Each subprocedure of \algsolveeqlin{} that transforms an equation
$U = V$ to $U' = V'$ can be implemented in 
bit-space $\Ocomp(\Abs[1]{U = V} + \Abs[2]{U' = V'})$,
where $\Abs[1]{\cdot}$ and $\Abs[2]{\cdot}$
are the Huffman codings for letters in $U = V$ and $U' =V'$, respectively.
\end{lemma}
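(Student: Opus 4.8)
The plan is to prove Lemma~\ref{lem:performing subprocedures in linear space}
in two parts, mirroring its statement.

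\paragraph{Computing the Huffman code.}
For the first claim I would argue that all the standard steps of Huffman's
algorithm fit into linear bitspace. First decode the two input strings once to
count, for each symbol $a$ occurring in $U=V$, its number of occurrences
$n_a$; these counters take $\Ocomp(\sum_a \log n_a)$ bits, which is
$\Ocomp(\Abs{U=V})$ because a symbol occurring $n_a$ times contributes at least
$n_a\ge 1$ to the encoded length (and in fact the entropy lower bound for any
uniquely decodable code already forces $\Abs{U=V} = \Omega(\sum_a \log n_a)$ up
to the number of distinct symbols, which is itself $\Ocomp(\Abs{U=V})$). Running
the usual greedy merging on these weights builds the Huffman tree; the tree has
$\Ocomp(\#\text{distinct symbols})$ nodes, each storing a weight bounded by
$|U|+|V|$, so the tree and the resulting codebook occupy $\Ocomp(\Abs{U=V})$
bits, and the priority-queue bookkeeping is of the same order. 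Finally the
classical fact that the Huffman code length is at most one bit per symbol longer
than the entropy (and no longer than any other prefix code) gives
$\Abs{U=V}_{\mathrm{Huff}} = \Ocomp(\Abs{U=V})$ for whatever code the input used,
so nothing blows up. I would cite this last point as folklore rather than
reprove it.

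\paragraph{Implementing a subprocedure within the stated bound.}
For the second claim the point is that every subprocedure used in
\algsolveeqlin{} (popping a bounded number of letters per variable, a single
pair compression, and — via Lemma~\ref{lem:block compression linear space} — a
block compression) changes the equation only by local, bounded edits, so it can
be carried out by a single left-to-right pass that reads the old encoding and
writes the new one. Concretely I would: (1) build the old codebook and the new
codebook (the latter requires knowing the symbol frequencies of $U'=V'$, which
for popping and pair compression can be computed by one pass over $U=V$ once the
nondeterministic guesses are fixed, and for block compression is handled inside
the procedure of Lemma~\ref{lem:block compression linear space}); (2) stream
through $U=V$, decoding one symbol at a time using the old codebook, applying the
edit (leave the symbol, delete it, replace a covered pair $ab$ by its fresh
letter $c$, prepend/append a popped letter to a variable occurrence), and
emitting the corresponding symbols of $U'=V'$ re-encoded with the new codebook.
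The working memory at any instant is: the two codebooks, $\Ocomp(1)$ decoded
symbols of context, and $\Ocomp(1)$ counters — all bounded by
$\Ocomp(\Abs{U=V}_1 + \Abs{U'=V'}_2)$, since the codebooks are linear in the
respective encodings by the first part. The output tape holds $U'=V'$ in the new
encoding, of size $\Abs{U'=V'}_2$. Care is needed that the number of distinct
symbols in either equation is itself $\Ocomp(\Abs{\cdot})$ (true, as each
distinct symbol contributes at least one bit), so the codebooks really are
within budget.

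\paragraph{Main obstacle.}
The delicate point is not the streaming itself but making sure the two codebooks
coexist in the allowed space: the bound is stated with the \emph{old} encoding's
size for $U=V$ and the \emph{new} encoding's size for $U'=V'$, and a priori a
symbol could be cheap under one code and expensive under the other. I would
resolve this by observing that the Huffman code of an equation is always linear
in that equation's own plain length $|U|+|V|$ (one needs at most
$\lceil\log(\#\text{distinct symbols})\rceil$ bits of tree description per
symbol, and $\#\text{distinct symbols}\le |U|+|V|$), hence the old codebook is
$\Ocomp(\Abs{U=V}_1)$ and the new one is $\Ocomp(\Abs{U'=V'}_2)$, and neither
dominates the other — they simply add. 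The second, more bookkeeping-heavy
subtlety is computing the frequencies of $U'=V'$ without first materialising it
in an unbounded encoding; for popping and pair compression this is a direct
count during a preliminary pass, and for block compression it is subsumed by
Lemma~\ref{lem:block compression linear space}, so no genuinely new idea is
required there.
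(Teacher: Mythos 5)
Your proposal is correct and follows essentially the same route as the paper: count symbol frequencies and build the Huffman tree in linear space, defer block compression to Lemma~\ref{lem:block compression linear space}, and handle popping/pair compression by representing the edits with only constant overhead per edited position before recomputing the code, which is exactly the point needed to avoid the blow-up from popping a letter with a long old code. The only (cosmetic) difference is bookkeeping: the paper writes an explicit intermediate encoding ($X$ plus $\Ocomp(1)$ marker bits for popped letters, and $(ab)$ for a compressed pair) and then re-Huffmans it, whereas you precompute the new codebook via a preliminary counting pass with provisional names and then stream-re-encode; both realize the same idea within the same space bound.
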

\begin{proof}
In the following we will use some `fresh symbols'. To guarantee that they have short codes,
we prefix each code used in the string by, say, \texttt{0} and each fresh symbol by \texttt{0}.
This will keep the space linear.
In the end those extra bits are removed.

A standard implementation of the Huffman coding firstly calculates for each symbol in the string the number of its occurrences,
this can be done in linear space, as a symbol plus number of its occurrences takes space linear in the space taken by all those occurrences.
Then it iteratively builds an edge-labelled tree with leaves corresponding to original letters.
The labels on the path from the root to a leaf $a$ give a code for $a$.
The algorithm takes two letters with the smallest number of occurrences,
creates a new node (which is treated further on as a leaf),
attaches the two nodes to the new node and labels the edges with \texttt{0} and \texttt{1}.
This is iterated till one node is obtained.
The tree uses linear space in total and otherwise the used space only decreases.
It is easy to see that the whole computation can be done in linear space.

For popping letters in $(\letters_\ell,\letters_r)$ compression
we use fresh symbols $X\#0$, $X\#1$ that are bit-encoded as $X$ plus $\Ocomp(1)$ bits for letters popped to the left and right
and then simply list the letters that are equal.
In this way we can compute the Huffman coding after popping and translate to the new encoding.
For compression itself, when $ab$ is compressed, we encode it as $(ab)$,
where `$($' and `$)$' are fresh symbols (though the same are used for each such encoding)
and then recompute the Huffman coding.

For block compression, Lemma~\ref{lem:block compression linear space}
already states that it can be performed in space linear in the bitsize of the old equation,
the Huffman coding of the new one can be then computed.
\end{proof}

\section{Space consumption}
\label{sec:space}
In order to bound the space consumption, we will use bit-encoding of letters that depends on the current equation.
We use the term `encoding' even though it may assign different codes to different occurrences of the same letter,
but two different letters never have the same code.
Since we are interested in linear space only,
we do not care about the multiplicative $\Ocomp(1)$ factors in the space
consumption and can assume that our code is a prefix code (i.e.\ no codeword is a prefix of another),
say by terminating each encoding with a special symbol $\$$.
We show that such an encoding uses linear space,
which also shows that the Huffman encoding of the letters in the equation
uses linear space,
as Huffman code uses the smallest space among the prefix codes (which is a folklore result by now).

The idea of our `encoding' is:
for each letter in the current equation
we establish an interval $I$ of indices in the original equation
(viewed as as string $\depint{1\twodots \abs{U_0V_0}}$)
on which it `depends' (this has to be formalised)
and encode this letter as $\depint{I}\#i$,
when it is $i$th in the sequence of letters assigned $I$
and $\depint{I}$ is the factor of the original equation restricted to indices in $I$.
We prove that letters given the same encoding are indeed the same.
For the space bound, we separately count the space used by all $\depint{I}$s
and separately the one used for numbers.
We show that using a `random' partition each time we can guarantee that `on average' a single symbol $\alpha$ from the original equation
is used in constant number of $I$s
(as $\alpha$ may have different encoding, we also need to weight this using the $\Abs{\alpha}$).
For the numbers, the argument is similar: for a given $I$ the average amount
of different letters encoded as $I\#i$ is $\Ocomp(1)$
and so the total space consumption is linear (the actual calculations are a bit more tricky here).
The dependency is formalised in Section~\ref{subsec:depints},
while Section~\ref{sec:pair compression-strategy} first gives the high-level intuition
and then upper-bound on  the used space.

\subsection{Dependency intervals}
\label{subsec:depints}
The input equation is denoted by $U_0 = V_0$,
the $U=V$ and $U' = V'$ are used for the current equation and equation
after performing some operation.
We treat the input equation as a single string $U_0V_0$ and consider its
\emph{indices}, i.e.\ numbers from $1$ to $\abs{U_0V_0}$,
denoted by letterss $i, i', j$ and intervals of such indices,
denoted by letter $I, I'$ or $[i \twodots j]$.
The $\depint{I}$ and $\depint{i\twodots j}$ denotes the substring of $U_0V_0$
restricted to indices in $I$ or in $[i \twodots j]$.
We use a partial order $\leq$ on intervals:
$[i\twodots j] \leq [i'\twodots j']$ if $i \leq i'$ and $j \leq j'$.
Note that this is \emph{not} the inclusion as sets, which we will also use.

In the current equation, i.e.\ the one stored by \algsolveeqlin,
we do not consider indices but rather \emph{positions}
and denote them by letters $p,q$. We do not think of them as numbers
but rather as pointers: when $U = V$ is transformed by some operation
to $U' = V'$ but the letter/variable at position $p$ was not affected
by this transformation,
we still say that this letter/variable is at position $p$.
On the other hand, the affected letters are on positions
that were not present in $U = V$.
In the same spirit we denote by $p$ the positions in $U = V$
and the corresponding position in $\sol U = \sol V$.
We still use the left-to-right ordering on positions, use $p-1$ and $p+1$ to denote the previous and next position;
we also consider intervals of positions,
yet they are used rarely so that they are not confused with intervals of indices, on which we focus mostly.
Given an equation $U = V$ and an interval of positions $P$ by
$UV[P]$ we denote the factor of letters and variables at positions in $P$,
again, this notation is used rarely.
In the input equation the index and position is the same.

With each position $p$ in the (current) equations (including the endmarkers)
we associate \emph{dependency interval} $\dep(p)$, called \emph{depint};
if the depint is a single index $\{i\}$, we denote it $i$.
The idea is that the letter at position $p$ is uniquely determined by $\depint{\dep(p)}$
(and the nondeterministic choices of the algorithm, i.e.\ it is determined at this moment of the algorithm's run),
note that $\depint{\dep(p)}$ may include both variables and letters.
We use the notions of $\subseteq$ and $\supseteq$ for the depints
with a usual meaning; we take unions of them,
denoted by $\cup$, but only when the result is an interval.
We say that $I$ and $I'$ are similar, denoted as $I \sim I$,
if $\depint{I} = \depint{I'}$.

To get some rough intuition:
if we have several occurrences of $XwY$ in the equation,
then as long as $X$ and $Y$ are not removed, the word between them changes in the same way
and so the letters between $X$ and $Y$ have depints that are factors of $XwY$.
When $Y$ is removed, we have to resort to the next variable, say $Z$ and gradually enlarge the depint to $X w Y v Z$.

Given an interval $I$ of indices in $U_0V_0$
by $\pos(I)$ we denote positions in the current equation
whose depint is $I$, i.e.\ $\pos(I) = \makeset{p}{\dep(p) = I}$.
In the analysis it is also  convenient to look at positions
whose depint is a superset of $I$:
$\suppos(I) = \makeset{p}{  \dep(p) \supseteq I}$,
this is usually used for $I = \{i\}$.
The mnemonics is that \emph{pos}itions whose depint $\supseteq I$.

We shall ensure the following properties
\begin{enumerate}[({I}1)]
\item Given a depint $I$, the $\pos(I)$ is a (perhaps empty) interval of positions,
similarly $\suppos(I)$.\label{I1}
\item Given depints $I, I'$ such that $\pos(I) \neq \emptyset \neq \pos(I')$ either $I \leq I'$ or $I \geq I'$. \label{I2}
\item If $I \sim I'$ then the words induced by the intervals of positions $\pos(I)$ and $\pos(I')$ are the same, i.e.\ $\factor{\pos(I)} = \factor{\pos(I')}$. \label{I3}
\end{enumerate}

\subparagraph*{Encoding of letters}
Letters in $\pos(I)$ are encoded as
$\depint{I}\#1$, $\depint{I}\#2$, etc.
Note, that there is no a priori bound on the size of such numbers.
Furthermore, if $I' \sim I$ then encoding $I\#i$ and $I'\#i$ is the same
(these are the same symbols by~\Iref{3}).

\subparagraph*{Assigning depints to letters}
The original positions in the equation have depints equal to themselves,
i.e.\ position $p$ in $U_0V_0$ we have $\dep(p) = \{p\}$.
Note that the ending markers also have their depints.
When $X$ at position $p$ pops a~letter into position $p'$
then $\dep(p') \gets \dep(p)$
(which is the position of this occurrence of $X$ in the input equation).
Before we perform the $(\letters_\ell,\letters_r)$ compression
then in parallel for each position $p$ such that $UV[p] \in \letters_\ell$
we assign $\dep(p) \gets \dep(p) \cup \dep(p+1)$
($p+1$ may be a position of a variable or of an endmarker),
we say that we increase the depint in this case.
Then we perform a symmetric action for positions whose letters
are in $\letters_r$ (so for $p-1$).
A simple  argument, see Lemma~\ref{lem:extending depints},
shows that the order of operation ($\letters_\ell$ or $\letters_r$ first) does not matter.
Observe that we compress letters with the same depint
and the new position has exactly this depint,
which is the union of depints of positions from before increase of depints.
This is intuitively clear: if $ab$ is compressed to $c$ then this $c$ depends on
the union of intervals on which $a$ and $b$ depended.
However, note that if we do not compress a letter from $\Gamma_\ell$ on position $p$
then its depint should also be extended:
it was not compressed because of the letter on position $p+1$.

\begin{lemma}
\label{lem:extending depints}
The depints assigned before pair compression are the same, regardless of whether the $\Gamma_\ell$ or $\Gamma_r$
letters are considered first.

If $\factor{p\twodots p+1} \in \Gamma_\ell \Gamma_r$
then right before the compression they have the same depint.
\end{lemma}
\begin{proof}
It is enough to show that for three consecutive letters $abc$ the depint of $b$
is going to be the same, regardless of whether we consider $\letters_\ell$
or $\letters_r$ first
(note that due to endmarkers $b$ always has a symbol to the right and left).
If $b \notin \letters_\ell \cup \letters_r$ then there is nothing to prove,
as the depint remains the same;
the case $b \in \letters_\ell$ and $b \in \letters_r$ are symmetric
(note that it is always true that $\letters_\ell \cap \letters_r = \emptyset$),
so we consider only the former.

Let the depints of $b, c$ (or rather: their positions) be $I, I'$.
If we consider $\letters_\ell$ first, then in the first step $b$ gets the depint $I \cup I'$ and in the second step nothing changes.
If we consider first $\letters_r$ and $c \notin \letters_r$ then after the first step
the depints of $b, c$ are still $I, I'$ and in the second step $b$ gets depint $I \cup I'$.
If $c \in \letters_r$ then in the first step it gets the depint $I \cup I'$
and $b$ still has depint $I$.
Then in the second step $b$ gets depint $I \cup (I \cup I') = I \cup I'$.
\end{proof}

For \letters compression, we perform in parallel the following operation for each block (perhaps of length $1$) of a letter in \letters:
given a block $a^\ell$ (for $\ell \geq 1$) at positions $p, p+1, \ldots, p+\ell-1$
we set the depints of those positions to
$\bigcup_{i=-1}^\ell \dep(p+i)$ (note that $p-1$ and $p+\ell$ are included).
Observe that:
\begin{itemize}
\item when we compress then all letters within the compressed block have the same depint;
\item we increase the depint even if $\ell = 1$, in which case there is no compression.
\end{itemize}
The intuition for including the depints of letters to the right and left of the block
is that the block depends on them as well, as they show where the block terminates.

Observe that by the way the depints are extended,
the depints of the endmarkers are never changed.

In the following we mostly focus on $\suppos(i)$.
As this is an interval of positions,
we visualize that $\suppos(I)$ extends to the neighbouring positions.
Thus we will refer to operations of changing the depints before the block compression
and pair compression as (except popping letters from variables)
in which letters become $\suppos(I)$ letters
\emph{extending} $\suppos(I)$ to new positions.
Note that this notion does not apply to the case when we pop letters from variables.
Note that the same operation may extend $\suppos(I)$ and $\suppos(I')$ to the same position.

Depints defined in this way satisfy the conditions \Irefall,
this is shown in the below Lemma~\ref{lem:basic depint occurrences}.

\begin{lemma}
\label{lem:basic depint occurrences}
\Irefall hold during \algsolveeqlin.
\end{lemma}
\begin{proof}
We first show \Iref{1} for $\suppos(i)$.
The proof is by induction;
this is true at the beginning.
If we make a union of depints, a position adjacent to a position in $\suppos(i)$
can become part of $\suppos(i)$
(this can be iterated when the depints are changed before the blocks compression),
which is fine.
During the compression, we compress symbols on positions with the same depints, so this is fine.
When we pop a letter from variable at position $p$ to position $p'$
then $\dep(p') = \dep(p) \in \suppos(i)$
and by inductive assumption $\suppos(i)$ was an interval,
so either we insert a position into it, so it is still an interval,
or we create a new position to the left or right of it,
so $\suppos(i)$ is still an interval.

Now \Iref{1} for $\suppos([i\twodots j])$ for an arbitrary
depint $[i \twodots j]$ follows:
$\suppos([i \twodots j]) = \bigcap_{k=i}^j \suppos(k)$ and as each $\suppos(k)$
is an interval, also $\suppos([i \twodots j])$ is.

We now show by induction that $i \leq i'$
implies $\suppos(i) \leq \suppos(i')$.
Clearly this holds at the beginning,
as then $\suppos(i) = \pos(i) = \{i\}$ and $\suppos(i') = \pos(i') = \{i'\}$.
Consider the moment, in which the condition $\suppos(i) \leq \suppos(i')$
is first violated,
by symmetry it is enough to consider the case in which
the first position in $\suppos(i')$ is smaller than the first in $\suppos(i)$.
If this position was just popped then it cannot be popped to the right,
as the position of popping variable is in $\suppos(i')$.
So it was popped to the left.
But then the variable that popped it was on position $p' \in \suppos(i')$ and by induction assumption $\suppos(i') \geq \suppos(i)$,
so either $p' \in \suppos(i)$, which cannot happen,
as $\depint{p}$ has only one index (this is a variable)
or there is $p \in \suppos(i)$ such that $p' > p$.
Then the new popped position is not to the left ot $p$, contradiction.
The other option is that this happened when a depint of a position $p'$
was changed so that it got into $\suppos(i')$.
But then $p'+\ell$ for some $\ell \neq 0$ was in $\suppos(i')$ and $p'$
is increased by $\dep(p' + \ell)$.
By induction assumption there was $p \in \suppos(i)$ such that $p \leq p' + \ell$.
If $p \leq p'$ then we are done, in particular, if $\ell <0$ then we are done,
so in the following we consider $\ell > 0$.
Then $p' < p \leq p'+\ell$ and so by the way the depints are increased,
$\dep(p')$ is increased as well by $\dep(p)$, so $p' \in \suppos(i)$.
As $p'$ is left-most in $\suppos(i')$, we have that left-most position in $\suppos(i)$ is less or equal to $p'$.

Concerning \Iref{2}, we show that for $p < p'$ implies $\dep(p) \leq \dep(p')$.
Let $\dep(p) = [i\twodots j]$ and $\dep(p') = [i'\twodots j']$.
If $\dep(p) \leq \dep(p')$ does not hold then either $i > i'$ or $j > j'$.
We consider the former, the proof for the latter is symmetric.
In particular, $i' \notin \dep(p)$ and so $p \notin \suppos(i')$.
We already showed that then $\suppos(i) \geq \suppos(i')$.
So if $p+1 \in \suppos(i') \leq \suppos (i) \ni p $
then also $p \in \suppos(i')$, contradiction.

For the purpose of the proof of \Iref{3}, define
$\subpos(I) = \makeset{p}{\dep(p) \subseteq I}$
(a dual notion to $\suppos(I)$).

\begin{claim}
\label{aux_claim}
$\subpos(I)$ is an interval of positions.
Given two similar depints $I \sim I'$ it holds that
$UV[\subpos(I)] = UV[\subpos(I')]$
and the corresponding positions in them have similar depints.
\end{claim}

Note that \Iref{3} follows from the Claim:
clearly $\pos(I) \subseteq \subpos(I)$ (and $\pos(I') \subseteq \subpos(I')$),
by the Claim $UV[\subpos(I)] = UV[\subpos(I')]$
and the corresponding positions in them have similar depints,
in particular $\pos(I)$ and $\pos(I')$ are positions with depints $I$ and $I'$,
respectively,
so $UV[\pos(I)] = UV[\pos(I')]$.
Moreover, the Claim also implies \Iref{1} for $\pos(I)$,
as $\pos(I) = \suppos(I) \cap \subpos(I)$ and both are intervals,
so also $\pos(I)$ is.

It remains to show the Claim,
we do it by induction on the number of operations performed by the algorithm.
At the beginning we have $\subpos(I) = \pos(I) = I$ and similarly
$\subpos(I') = \pos(I') = I'$.
Then $I \sim I'$ by definition means that $\depint{I} = \depint{I'}$
and as at the beginning $UV = U_0V_0$, we get the claim.

If a variable $X$ at position $ p \in \subpos(I)$ pops a letter,
by inductive assumption $X$ occurs at the corresponding position $p' \in \subpos(I')$
and by the algorithm it pops the same letters (to the same side)
and the positions of those letters have depints $\dep(p)$ and $\dep(p')$ respectively, and $\dep(p) \sim \dep(p')$.
So those new positions are included in $\subpos(I)$ and $\subpos(I')$,
respectively, and they are at corresponding places.
In particular, as $p, p'$ are next to each other,
$\subpos(I)$ is still an interval.

If letters at positions $p, \ldots p + \ell$ are compressed
then right before the compression those positions have the same depints
and in $\subpos(I')$ on corresponding positions $p',\ldots, p'+\ell$
there are the same letters with similar depints;
in particular if one of those letters is in $\subpos(I)$ ($\subpos(I')$)
then all of them are,
so the corresponding letters are compressed in the same way,
also afterwards the resulting letter is still within $\subpos(I)$ ($\subpos(I')$),
as it has the same depint as the compressed letters.
Moreover, as we replace consecutive positions that have the same depint
with a position of this depint, each $\subpos(I)$ remains an interval.

The last possibility is that the depint of some position is increased,
say for $p \in \subpos(I)$ we assign $\dep(p) \gets \bigcup_{h=-k}^\ell \dep(p+h)$;
this corresponds to either the block compression or pair compression.
We first show that each $\subpos(I)$ is still an interval.
\begin{itemize}
\item In the pair compression note that $\dep(p) \gets \dep(p) \cup \dep(p')$ where $p' = p+1$ or $p' = p-1$.
If $p$ seizes to be in $\subpos(I)$ then $p' \notin \subpos(I)$ and so $\subpos(I)$ looses its first or last position,
so it is still an interval.

\item In the block compression each position in $[p-k+1 \twodots p+\ell-1]$ gets the same depint.
So either all of them are in $\subpos(I)$ or none.
If none then some of them was not in $\subpos(I)$ before and as $\subpos(I)$ is an interval,
we now remove its prefix or suffix, so it is still an interval.
\end{itemize}
Let $p' \in \subpos(I')$ be the corresponding position in $\subpos(I')$.
Then the letter at position $p'$ is the same as the one at $p$
and so its depint is replaced with
$\dep(p) \gets \bigcup_{h=-k'}^{\ell'} \dep(p+h)$.
We show that if $[p-k\twodots p+\ell] \subseteq \subpos(I)$  then
$k = k'$ and $\ell = \ell'$ and
$[p'-k\twodots p'+\ell] \subseteq \subpos(I')$:
\begin{itemize}
\item For pair compression, as the letters at position $p$ and $p'$ are the same by the induction assumption,
then $k, \ell$ and $k', \ell'$ are determined by this letter and so $k = k'$ an $\ell = \ell'$.
\item For the block compression, let the letter at position $p$ be $a$.
Then the $a$ block is on positions $[p-k+1\twodots p + \ell-1]$,
in particular the letters on positions $p-k$ and $p+\ell$ are not $a$.
Then by induction assumption the corresponding letters at positions $[p'-\ell \twodots p'+k]$
are the same, in particular the $a$-block is of the same length.
Hence $k = k'$ and $\ell = \ell'$.
\end{itemize}
From the induction assumption in this case the corresponding letters are the same
and have similar depints,
so the new depints of $p, p'$ are similar.
Finally, by symmetry, if 
$[p'-k'\twodots p'+\ell'] \subseteq \subpos(I')$,
then
$[p-k\twodots p+\ell] \subseteq \subpos(I)$,
$k = k'$ and $\ell = \ell'$, etc.
So the claim holds if at least one of the positions $p, p'$ is in $\subpos(I)$ or
$\subpos(I')$ after the increase of depints.
It remains to show that if $p$ ($p'$) is removed from $\subpos(I)$
($\subpos(I')$, respectively),
then also $p'$ is removed from $\subpos(I')$ ($p$ from $\subpos(I)$, respectively).
We already showed that $[p'-k'\twodots p'+\ell'] \subseteq \subpos(I')$ if and only if 
$[p-k\twodots p+\ell] \subseteq \subpos(I)$.
By negating both sides of this statement we obtain that
$[p-k\twodots p+\ell] \not \subseteq \subpos(I)$ 
if and only if
$[p'-k'\twodots p'+\ell'] \not \subseteq \subpos(I')$.
The first condition means that
$\bigcup_{h = -k}^\ell \dep(p+h) \not \subseteq I$
and the latter that
$\bigcup_{h = -k'}^{\ell'} \dep(p'+h) \not \subseteq I'$.
So $p$ is removed from $\subpos(I)$ if and only if $p'$ is removed from $\subpos(I')$.
\end{proof}

\subsection{Pair compression strategy}
\label{sec:pair compression-strategy}
We assume that \algsolveeqlin{} makes the nondeterministic choices
according to the solution,
thus the space consumption of a particular non-deterministic execution
depends only on the choices of the partitions during pair compression, called \emph{a strategy}.
We describe a strategy leading to a linear-space sage.

\subparagraph*{Idea}
\label{subsubsec:idea}
Imagine we ensured that during one phase each variable popped $\Ocomp(1)$
letters and each $\suppos(i)$ expanded by $\Ocomp(1)$ letters.
Then $\abs{\suppos(i)} = \Ocomp(1)$:
we introduced $\Ocomp(1)$ positions to $\pos(i)$,
say at most $k$, and by Lemma~\ref{lem:shortening the strings}
among positions in $\suppos(i)$ at the beginning of the phase
at least $2/3$ took part in compression,
so their number dropped by $1/3$;
thus $\abs{\suppos(i)} \leq 3k$.
As a result, $\abs{\pos(I)}\leq 3k$ for each depint $I$:
as $\pos(I) \subseteq \suppos(i)$ for $i \in I$.
This would yield that the whole bit-space used for the encoding is linear:
each number $m$ used in $\depint{I}\#m$ is at most $3k = \Ocomp(1)$,
so they increase the size by at most a constant fraction.
On the other hand, the depints consume:
\begin{align*}
\sum_{I: \text{depint}} \Abs{\depint{I}} \cdot \abs{\pos(I)}
=
\sum_{i: \text{index}} \Abs{\depint{i}} \cdot \abs{\suppos(i)} 
\end{align*}
(a simple proof is given later)
and the right hand side is linear in terms of the input equation:
$\abs{\suppos(i)} = \Ocomp(1)$ and $\sum_{i: \text{index}} \Abs{\depint{i}}$
is the the bit-size of the input equation.

Unfortunately, we cannot ensure that each variable
pops $\Ocomp(1)$ letters nor that each $\suppos(i)$ extends by $\Ocomp(1)$ positions.
We can make this true \emph{in expectation}:
Given a phase, we call a letter \emph{new},
if it was introduced during this phase.
New letters cannot be popped nor can $\suppos(i)$ be extended by
positions with new letters.
Thus they are used to prevent extending $\suppos(i)$ and popping:
it is enough to ensure that the first/last letter of a variable
is new and that a letter on the position to the left/right of $\suppos(I)$ is new.
Now, given a random partition there is a $1/4$
probability that a fixed pair is compressed (and the resulting letter is new),
which in a sense means that in expectation a new letter will appear within $\Ocomp(1)$ letters from each end of $\sol X$
and within $\Ocomp(1)$ letters to the left and right of $\suppos(I)$.
It remains to formalise this approach and show that the ``expectation'' translates
to only $\Ocomp(1)$ times worse worst-case performance.

\subparagraph*{Strategy}
Given a solution \solution{} of an equation we say that a variable
$X$ is \emph{left blocked} if \sol X has at most one letter or the first or second letter in \sol X is new, otherwise a variable is \emph{left unblocked};
define \emph{right blocked} and \emph{right unblocked} variables similarly.
Let an index $i$ be such that $\depint{i}$ is not a variable,
then $i$ is \emph{left blocked} if in \sol U (or \sol V, respectively)
there is at most two positions to the left of $\suppos(i)$ (and none of them is a position of a variable)
or one of the letters on the positions one and two to the left of $\suppos(i)$ is new,
otherwise $i$ is \emph{left unblocked}; 
define \emph{right blocked} and \emph{right unblocked} indices similarly.

Note that $\suppos(i)$ contains only letters from the equation (as well as variables),
so letters in $\sol U$ that were obtained from substitution for a variable
are not included in it.

\begin{lemma}
\label{lem:blocking}
Consider a solution $\solution = \solution_0$ and consecutive solutions $\solution_1, \solution_2, \ldots$
corresponding to it during a phase.

If a variable $X$ becomes left (right) blocked for some $\solution_k$,
then it is left (right, respectively) blocked for each $\solution_\ell$ for $\ell \geq k$
and it pops to the left (right, respectively) at most $1$ letter after it became left (right, respectively) blocked.

If an index $i$ becomes left (right) blocked for some $\solution_k$
then it is left (right, respectively) blocked for each $\solution_\ell$ for $\ell \geq k$ 
and at most one letter to the left (right, respectively)
will have its depint extended by $i$ after $i$ became left (right, respectively) blocked.
\end{lemma}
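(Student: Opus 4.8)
The plan is to analyze the two halves (variables vs.\ basic depfactors) in parallel, since the argument is essentially the same: "being new" is a persistent property of a letter, so once the relevant witnessing letter near the boundary is new, nothing can destroy it, and the boundary can move past it at most once more.

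First I would handle the variable case. Suppose $X$ becomes left blocked for $\solution_i$. If $\sol[_i] X$ has at most one letter, then by Lemma~\ref{lem:sound and complete} the corresponding solutions only shorten $\sol X$ (popping letters and compressing), so it stays of length $\leq 1$ forever and $X$ pops at most one more letter to the left in total --- in fact at most one, namely when it is emptied. Otherwise the first or second letter of $\sol[_i] X$ is new. The key observation (already noted in Section~\ref{subsubsec:idea}) is that a new letter can never be popped and can never have a basic depfactor extended to it, and moreover compression during the remainder of the phase only merges letters or removes blocks --- it never turns a new letter into a non-new one, nor inserts a non-new letter before a new one that sits at position $1$ or $2$. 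So I would argue: if the first letter of $\sol[_i]X$ is new, then $X$ cannot pop it; it may pop the (non-new, or new) letter only if it is in front --- but there is none --- hence $X$ pops nothing to the left, and the first letter stays new (possibly compressed with its right neighbour, which keeps it new since compressed letters inherit... wait, here I must be careful: a \emph{new} letter is one introduced during pair compression of \emph{this} phase, and compressing two new letters yields a new letter, compressing a new letter with an old one also yields a letter introduced in this pair compression, hence new). If instead the \emph{second} letter of $\sol[_i]X$ is new (and the first is not), then $X$ can pop that first letter once; after that the formerly-second letter is first and still new, so no further popping to the left is possible, and $X$ is left blocked for all $\solution_j$, $j \geq i$. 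This gives the "$\leq 1$ letter" bound and persistence.

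For the basic depfactor case the argument is structurally identical, with "letter to the left of $\sup D$" replacing "first letter of $\sol X$". If $D$ becomes left blocked for $\solution_i$ because there is $\leq 1$ letter to the left of $\sup D$ in the relevant side, then by monotonicity of $\sup D$ (Lemma~\ref{lem:basic depfactor occurrences}, specifically the fact that $\sup D$ is an interval and, across corresponding solutions, only the compression/extension operations act) the number of letters to the left of $\sup D$ never increases past what a single extension could consume, so at most one more letter gets its depfactor extended by $D$. If instead the letter one or two to the left of $\sup D$ is new, I would use again that new letters block extension: the depfactor $D$ cannot be extended to a new letter, so if the immediately-left letter is new, no extension to the left happens; if only the second-left letter is new, $D$ can extend to the first-left letter once, after which the new letter is immediately left of $\sup D$ and no further left extension occurs. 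In all cases $D$ remains left blocked for every $\solution_j$ with $j \geq i$. The right-blocked statements are symmetric. Persistence of "blocked" then follows because in each case, after the (at most one) boundary move, the witnessing new letter is in position $1$ or $2$ from the boundary and stays there (compression only shrinks the gap or keeps it), so the defining condition of "blocked" continues to hold.

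The main obstacle I anticipate is making precise the claim that compression cannot "un-new" the witnessing letter or push a non-new letter in front of it: one has to check that when a block $a^\ell$ gets replaced by a single letter, or a pair $ab$ gets replaced, the resulting letter still counts as new whenever at least one of the compressed letters was new, \emph{and} that the position of the nearest new letter to the boundary does not increase. This requires carefully tracking what "new" means relative to the fixed phase and invoking Lemma~\ref{lem:letters are different} (no two equal consecutive letters from $\letters$ after block compression) to rule out degenerate blockings during pair compression. The bookkeeping of "positions one or two from the boundary" under simultaneous shrinking on both sides is where the routine-but-fiddly case analysis lives; everything else follows from the persistence of newness plus Lemma~\ref{lem:sound and complete}.
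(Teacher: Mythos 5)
Your proposal is correct and follows essentially the same route as the paper: new letters lie outside $\letters$, hence can neither be popped nor have a depfactor extended onto them, so the witnessing condition persists and the boundary can move past at most one more (old) letter; the length-$\leq 1$ case is handled separately and trivially. Your extra worry about compression "un-newing" a letter is vacuous — subsequent pair compressions in the phase only act on pairs from $\letters_\ell\letters_r\subseteq\letters^2$, so a new letter is never compressed again within the phase — but this does not affect the correctness of the argument.
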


\begin{proof}
If $X$ becomes left blocked because $\solution_k(X)$ has one letter,
then it will stay left blocked and can pop at most one letter further on.
If it becomes left blocked because first or second letter of $\solution_k(X)$ is new
then this new letter cannot be popped, as we pop only letters from \letters,
so this letter will remain on first or second position within
$\solution_\ell(X)$ for $\ell \geq k$ (so in this phase)
and it keeps $X$ left blocked.
In particular, if this letter
is first (second) in $\solution_k(X)$, then $X$ cannot pop left a letter
(can pop at most one letter); a similar argument applies on the right side.

Similarly, only positions with letters from \letters
(this does not include endmarkers)
can have their depints increased,
let $p$ be the leftmost position in $\suppos(i)$.
If a letter at position $p-2$ or $p-1$ is new,
then $\suppos(i)$ cannot extend to this position and so it will extend by at most one position
and will remain left-blocked.
Similarly, when there is only one letter (or no letter) to the left of $\suppos(i)$
then $\suppos(i)$ can extend only by this letter and it will remain left blocked.
A symmetric argument applies for right blocked depints. 
\end{proof}

The strategy iterates steps \ref{eq:variable weight}--\ref{eq:depint count}.
In a step $i$ it chooses a partition so that the corresponding $i$-th
sum decreases by a maximum amount (we show that this is at least half),
unless this sum is already $0$:

\begin{align}
\label{eq:variable weight}
\sum_{\substack{X \in \variables\text{ left unblocked}}} n_X \cdot \Abs{X} + \sum_{\substack{X \in \variables \text{ right unblocked}}} n_X \cdot \Abs{X}\\
\label{eq:depint weight}
\sum_{\substack{i: \text{ left unblocked index}}} \Abs{\depint{i}}
+ \sum_{\substack{i: \text{ right unblocked index}}} \Abs{\depint{i}}\\
\label{eq:variable count}
\sum_{\substack{X \in \variables\text{ left unblocked}}} n_X +
\sum_{\substack{X \in \variables\text{ right unblocked}}} n_X\\
\label{eq:depint count}
\sum_{\substack{i: \text{ left unblocked index}}} 1
+ \sum_{\substack{i: \text{ right unblocked index}}} 1
\end{align}

In all cases the $\Abs{\alpha}$, where $\alpha$ is a variable or
a letter in the original equation $U_0 = V_0$,
denotes the bit-size of encoding size of $\alpha$ in the input equation.
If all letters and variables in the input equation are encoded using bit sequences
of the same length (over some alphabet)
then~\eqref{eq:variable weight} is equivalent to~\eqref{eq:variable count}
and~\eqref{eq:depint weight} to~\eqref{eq:depint count}.

The idea of the steps is: \eqref{eq:variable weight} upper-bounds the increase of bit-size
of encoding depints in the equation after popping letters.
So by iteratively halving it we ensure that total encoding increase caused by popping letters is small.
Similarly, \eqref{eq:depint weight} upper-bounds the increase of size of encodings due to extension of $\suppos(i)$ to new positions.
The following~\eqref{eq:variable count} is connected (in a more complex way) to an increase,
after popping, of number of bits used for numbers in the encoding of letters in $\pos(I)$.
Similarly, \eqref{eq:depint count} to an increase after the extension of depints.

\begin{lemma}
During the pair compression \algsolveeqlin{} can always
choose a partition that at least halves the value
of a chosen non-zero sum among \eqref{eq:variable weight}--\eqref{eq:depint count},
the other sums then do not increase.
\end{lemma}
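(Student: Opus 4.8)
\emph{Approach.} I would choose the partition of \letters into $\letters_\ell$ and $\letters_r$ uniformly at random — each letter of \letters independently to $\letters_\ell$ or $\letters_r$ with probability $\tfrac12$ — and show that the \emph{expected} value of the chosen sum, after one $(\letters_\ell,\letters_r)$-compression step, is at most half its current value; an averaging argument then produces a partition that actually at least halves it (and if the sum is already $0$ there is nothing to do). By Lemma~\ref{lem:blocking} a once-blocked variable or basic depfactor stays blocked, and the weights $n_X\Abs{X}$, $\Abs{D}$, $n_X$, $1$ attached to the summands are fixed throughout a phase, so each of \eqref{eq:variable weight}--\eqref{eq:depfactor count} is a fixed non-negative combination of weights indexed by the (only shrinking) set of currently left/right-unblocked variables, resp.\ basic depfactors. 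Hence it suffices to prove that, under the random partition, every such variable or depfactor \emph{survives} — stays left-unblocked, resp.\ right-unblocked — with probability at most $\tfrac12$; linearity of expectation then finishes it. Since the same survival event governs the summand in all four sums, one probability bound covers \eqref{eq:variable weight}--\eqref{eq:depfactor count} at once.

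\emph{The variable bound.} Fix a left-unblocked $X$ and let $c_1,c_2$ be the first two letters of \sol X; they lie in \letters by definition of left-unblocked and, being adjacent old letters, are distinct by Lemma~\ref{lem:letters are different}, so their $\letters_\ell/\letters_r$-memberships are independent fair coins, as are those of the following old letters $c_3$ (and, if needed, $c_4$). Popping does not change \sol X as a substring of \sol U, so the only things to track are the first two letters of \sol X after popping $c_1$ when $c_1\in\letters_r$ (and symmetrically the last letter) and after the left-to-right $(\letters_\ell,\letters_r)$-compression, which replaces every old $\letters_\ell\letters_r$ pair by a new letter. A finite case analysis on the memberships of $c_1,c_2,c_3$ (occasionally $c_4$) shows that in every branch in which $X$ survives the first two letters of the new \sol X are again old, and enumerating the branches gives survival probability $\tfrac38$ in the generic case (\sol X long, $c_1,\dots,c_4$ distinct old letters) and at most $\tfrac12$ in the degenerate cases (\sol X short, $c_3$ already new, or $c_1=c_3$). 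The right-unblocked case is symmetric.

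\emph{The depfactor bound.} This is the delicate part. Fix a left-unblocked basic depfactor $D$; let $g_1$ be the leftmost letter of $\sup D$ and $d_1,d_2$ the letters one and two positions to the left of $\sup D$, which exist and lie in \letters ($d_1\neq d_2$ again by Lemma~\ref{lem:letters are different}), while the third position to the left holds either an old letter $d_3$ or the endmarker $@$. Two facts organise the analysis. First, since the pair-compression extension is performed in parallel from the old depfactors, $\sup D$ grows on the left by exactly one letter precisely when $d_1\in\letters_\ell$ (then the new depfactor of $d_1$ is $D_{d_1}D_{g_1}\supseteq D$) and never past $d_2$, as $d_2\notin\sup D$ and $D$ is a single basic depfactor. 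Second, new letters arise exactly on $\letters_\ell\letters_r$ pairs in a left-to-right scan, and no relevant pair at the boundary is ever pre-empted, since the competing consumptions demand contradictory memberships. Reading this off: if $d_1\in\letters_\ell$ then $\sup D$ absorbs $d_1$, and $D$ survives iff neither $d_3d_2$ nor $d_4d_3$ is a compressed pair — two \emph{disjoint} events (one forces $d_3\in\letters_\ell$, the other $d_3\in\letters_r$) of probability $\tfrac14$ each, so conditional survival $\tfrac12$ (and $0$ once $d_3=@$, since then $\sup D$ is left with at most one letter to its left, or once $d_3$ is itself new); if $d_1\in\letters_r$ then the boundary of $\sup D$ stays put, and $D$ survives iff $d_2d_1$ and $d_3d_2$ are both uncompressed, i.e.\ $d_2\in\letters_r$ and $d_3\notin\letters_\ell$, of conditional probability $\tfrac14$ for a generic $d_3$ and at most $\tfrac12$ for $d_3=@$. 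In every case the weighted combination over $d_1\in\letters_\ell$ versus $d_1\in\letters_r$ is at most $\tfrac12\cdot\tfrac12+\tfrac12\cdot\tfrac12=\tfrac12$; the symmetric argument covers right-unblocked depfactors.

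\emph{Wrap-up and main obstacle.} With survival probability $\le\tfrac12$ for every summand, linearity of expectation gives $\mathbb{E}[\text{chosen sum after the step}]\le\tfrac12\cdot(\text{chosen sum before})$, so some partition at least halves it, which is the claim. The main obstacle is the depfactor analysis: one must bookkeep precisely how the left boundary of $\sup D$ moves under the combined effect of the depfactor-extension rule and the sequential pair compression, identify which letter sits one and two positions to the left afterwards, and — crucially — check that the endmarkers $@$ exactly play the role of ``too few letters to the left of $\sup D$'', which is what pins the worst-case survival probability at $\tfrac12$ rather than letting it exceed it.
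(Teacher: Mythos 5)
Your proposal is correct and follows essentially the same route as the paper: choose a uniformly random partition, use Lemma~\ref{lem:letters are different} and Lemma~\ref{lem:blocking} to show that each left/right-unblocked variable or basic depfactor becomes blocked with probability at least $\tfrac12$ (because the two relevant pair-compression events force contradictory memberships of the shared letter and are therefore disjoint, each of probability $\tfrac14$), and finish by linearity of expectation and averaging. The paper's case analysis is coarser---it only lower-bounds the blocking probability and dismisses the depfactor case as ``similar''---whereas you compute exact survival probabilities and spell out the boundary bookkeeping for $\sup D$, but the underlying argument is identical.
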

\begin{proof}
Consider \eqref{eq:variable weight} and take a random partition,
in the sense that each letter $a \in \letters$ goes to the $\letters_\ell$
with probability $1/2$ and to 
$\letters_r$ with probability $1/2$.
Let us fix a variable $X$ and its side, say left.
What happens with $n_X \cdot \Abs{X}$ in~\eqref{eq:variable weight}
in the sum corresponding to left unblocked variables?
If $X$ is left blocked then, by Lemma~\ref{lem:blocking}, it will stay left blocked and so the contribution is and will be $0$.
If it is left unblocked, then its two first letters $a, b$ are not new, so they are in \letters.
If $\sol X$ has only those two letters, then with probability $1/2$ the $a$ will be in $\Gamma_r$
and it will be popped and $X$ will become left blocked
(as \sol X has only one letter),
the same analysis applies, when the third leftmost letter is new.
The remaining case is that the three leftmost letters in \sol X are not new, let them be $a, b, c \in \letters$.
By Lemma~\ref{lem:letters are different} $a \neq b \neq c$.
With probability $1/4$ $ab \in \letters_\ell \letters_r$
and with probability $1/4$ $bc \in \letters_\ell \letters_r$.
Those events are disjoint (as in one $b \in \letters_r$ and in the other $b \in \letters_\ell$)
and so their union happens with probability $1/2$.
In both cases $X$ will become left blocked, as a new letter is its first or second in \sol X.
In all uninvestigated cases the contribution of $n_X \cdot \Abs{X}$
cannot raise.
Also, the analysis for the sum of right-unblocked $X$ is symmetrical.
This shows the claim in this case.

The case of~\eqref{eq:variable count} is shown in the same way as~\eqref{eq:variable weight}.

For~\eqref{eq:depint weight}, the analysis for an index $i$ that is left unblocked is similar,
but this time we consider the positions to the left of $\suppos(i)$
and $\suppos(i)$ can extend to them (instead of letters being popped from variables in case of~\eqref{eq:variable weight})
and some of them may be compressed to one.
Note that if there are no letters to the left/right
then this index is blocked from this side.
The only subtle difference is that only letters in the equation can be in $\suppos(i)$,
so if to the left of $\suppos(i)$ there is a variable,
the $\subpos(i)$ cannot extend to the left.
Thus when the left-most position of $\subpos(I)$ is $p$ and $p-3$ is a new letter,
the previous argument does not work directly.
However, when the letter at position $p-1$ is assigned to $\Gamma_\ell$,
so the case when we would like to argue that it now has $\dep(p-1) \gets \dep(p-1) \cup \dep(p)$ and so $p-1 \in \suppos(I)$,
then this letter is also popped from the variable, so indeed its depint increases
and the whole argument follows. 

The case of~\eqref{eq:depint count} is shown in the same way as~\eqref{eq:depint weight}.
\end{proof}

\subparagraph*{Space consumption}
We now give the linear space bound on the size of equation.
This formalises the intuition from the beginning of Section~\ref{subsubsec:idea}.
As a first step, we show an upper-bound on the encoding size of the equation;
define:
\begin{align*}
	H_d(U, V)
		&=
	\sum_{\substack{i: \text{index}}} \Abs{\depint{i}} \cdot \abs{\suppos(i)}\\
	H_n(U, V)
		&=
	\sum_{\substack{i: \text{index}}} 4 \abs{\suppos(i)} \cdot  \log(\abs{\suppos(i)}+1)\\
	H(U, V)
		&=
	H_d(U, V) + H_n(U, V) \enspace .
\end{align*}
$H_d$ corresponds to bitsize of depints in the encodings
and $H_n$: the numbers (following depints) in the encoding.
\begin{lemma}
	\label{lem:AbsandH}
Given the equation $(U, V)$ it holds that $\Abs{(U, V)} \leq
2\Abs[0]{(U_0, V_0)} + 
2H(U, V)$, where $\Abs[0]{(U_0, V_0)}$ is the bit-size of the encoding of the input equation.
\end{lemma}
\begin{proof}
The variables in the equation are encoded as in the input equation
with the extra prefixed $\texttt{0}$ to distinguished from the letters,
which are prefixed with \texttt{1}.
Thus variables use at most $2\Abs[0]{(U_0, V_0)}$ bits
and it is enough to show that our encoding (without the extra \texttt{1})
uses at most $H(U, V)$ for letters.

A letter at position $p$ is encoded as $\Abs[0]{\depint{\dep(p)}} \#q$,
where $p$ is the $q$-th position in $\pos(\dep(p))$.
We first estimate the space used by
$\Abs[0]{\depint{\dep(p)}}$ and then the one used by $\#q$.
Let $\dep(p) = [i \twodots j]$,
then $\Abs[0]{\depint{\dep(p)}} = \sum_{k=i}^j \Abs[0]{\depint{k}}$
and the space usage is obtained by taking a sum over all positions
(with letters) in the equation $(U, V)$.
When we change the order of grouping and first group by $\depint{k}$,
then summed for $k = 1 \ldots |U_0V_0|$ we obtain
$\sum_{k: \text{ index}} \Abs{\depint{k}} \cdot \abs{\suppos(k)} = H_d(U,V)$.
In numbers:
\begin{align*}
	\sum_{p: \text{ position}} \Abs[0]{\depint{\dep(p)}}
	&=
	\sum_{p: \text{ position}}
	\sum_{k \in \dep(p)} \Abs[0]{\depint{k}}\\
	&=
	\sum_{(p, k): k \in \dep(p)}
	\Abs[0]{\depint{k}}\\
	&=
	\sum_{(p, k): p \in \suppos(k)}
	\Abs[0]{\depint{k}}\\
	&=
	\sum_{k} \Abs[0]{\depint{k}} |\suppos(k)|\\
	&=
	H_d(U, V) \enspace .
\end{align*}

Let us now move to the space usage of numbers in the encoding (so the ones following the depints)
Given a depint $I$ each letter in $\pos(I)$ is assigned a number from
$1$ to $|\pos(I)|$, which is encoded on $\lceil \log(|\pos(I)|+1) \rceil$ bits.
So a number for the position $p$ uses $\lceil \log(|\pos(\dep(p))|+1) \rceil$ bits
and so the space usage for all positions is:
\begin{align*}
\sum_{p: \text{ position}} \lceil \log(|\pos(\dep(p))|+1) \rceil
	&=
\sum_{I: \text{ depint}} |\pos(I)| \lceil \log(|\pos(I)|+1) \rceil
\end{align*}
For each depint $I$ we choose an index $i_I \in I$ such that
one $i$ is chosen at most twice over all depints, this is done as follows:
We know that depints are linearly ordered by $\leq$, see~\Iref{2}.
Fix two consecutive depints in this order $I \leq I'$,
let $I = [i \twodots j]$, $I' = [i' \twodots j']$.
If $i < i'$ then we choose $i_I = i$ and if $i = i'$ then we choose $i_I = j$,
so $i_I$ is one of two ends of $I$.
If $I'$ is the last depint, then if $i = i'$ then we choose $i_{I'} =j'$
and otherwise $i_{I'} = i'$.

Suppose that some fixed $i$ is chosen for $I < I'$ as the beginning.
But this cannot be, as by the choice of $i = i_I$ we have that all following depints
do not include $i$.
So suppose that $i$ was chosen twice for $I < I'$ as the end.
But this cannot be, as by the choice of $i = i_I$ we have that for all following
depints include $i+1$ or some larger index.

As $i_I \in I$ we have $\suppos(i_I) \supseteq \pos(I)$ and so
$|\suppos(i_I)| \geq |\pos(I)|$. Hence
\begin{align*}
\sum_{I: \text{ depint}} |\pos(I)| \lceil \log(|\pos(I)|+1) \rceil
	&\leq
\sum_{I: \text{ depint}} |\suppos(i_I)|\lceil \log(|\suppos(i_I)|+1) \rceil\\
\notag	&\leq
2\sum_{i: \text{ index}} |\suppos(i)| \lceil \log(|\suppos(i)|+1) \rceil\\
\notag	&\leq
4 \sum_{i: \text{ index}} |\suppos(i)| \log(|\suppos(i)|+1)\\
\notag	&=
H_n(U,V)
\qedhere
\end{align*}
\end{proof}

Instead of showing a linear bound on $\Abs{(U , V)}$
we give a linear bound on $H(U, V)$.
Recall that $(U_0, V_0)$ denotes the input equation.

\begin{lemma}
\label{lem:linear space}
	Consider an equation $U = V$, its solution \solution,
a phase of \algsolveeqlin{} which makes the nondeterministic choices
according to \solution{} and partitions
according to the strategy.
Let the returned equation be $(U', V')$.
Then $H (U',V') \leq \frac{5}{6}H(U, V) + \alpha \Abs{(U_0, V_0)}$
and for an intermediate equation $(U'',V'')$ we have
$H(U'',V'') \leq \beta H(U, V) + \gamma \Abs{(U_0, V_0)}$
for some constants $\alpha, \beta, \gamma$.
\end{lemma}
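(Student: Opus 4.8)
The plan is to track the quantity $H = H_d + H_n$ through a single phase, splitting its change into the part caused by block compression and the part caused by the sequence of pair compressions performed under the strategy. The key is that $H_d$ and $H_n$ are both sums over basic depfactors $D$ of terms that depend only on $\abs{\sup D}$ (namely $\Abs{D}\cdot\abs{\sup D}$ and $2h(\abs{\sup D})$ with $h(x)=x\log(x+1)$), so it suffices to control, for each basic depfactor $D$, how $\abs{\sup D}$ evolves. First I would observe that the only events changing $\abs{\sup D}$ are: (i) a letter being popped from a variable into a position inside $\sup D$ or adjacent to it — but popped letters get the basic depfactor of their variable, so they only affect $\abs{\sup D}$ when $D$ is that variable's depfactor, and here the relevant bound is how many letters a variable pops; (ii) $D$ being extended to neighbouring letters before a pair- or block-compression; and (iii) compression itself, which can only merge sup-$D$-letters and hence only decrease $\abs{\sup D}$. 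So the growth of $\abs{\sup D}$ in a phase is governed by the number of letters $X$ pops (for the variable-type depfactors) and the number of extensions of $D$.

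Next I would invoke the strategy together with Lemma~\ref{lem:blocking} and a potential/expectation argument. Each of the four sums \eqref{eq:variable weight}--\eqref{eq:depfactor count} is at least halved (when nonzero) by the chosen partition at its step, and each is bounded above initially by something linear in $H(U,V)$ (for the weight sums \eqref{eq:variable weight}, \eqref{eq:depfactor weight}) or in the letter count, hence in $H(U,V)$ (for the count sums \eqref{eq:variable count}, \eqref{eq:depfactor count}), because $\Abs{X}$, $\Abs{D}$ and $n_X$ are all dominated by contributions already present in $H_d$. Iterating the halving steps a, b, c, d a constant number of times drives each sum below $\Abs{(U_0,V_0)}$; after that, by Lemma~\ref{lem:blocking}, every still-unblocked variable or depfactor contributes at most $\Ocomp(1)$ further pops/extensions, and the total weight of those is bounded by $\Abs{(U_0,V_0)}$. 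I would combine this with Lemma~\ref{lem:shortening the strings} applied to the string of sup-$D$-letters in $\sol U\sol V$: since every pair in $\letters^2$ is covered during the phase, a constant fraction of the sup-$D$-letters present at the start of the phase is compressed away, so at the end $\abs{\sup D}\le \tfrac23\abs{\sup D}^{\text{old}} + (\text{new sup-}D\text{-letters})$. Summing $\Abs{D}\cdot\abs{\sup D}$ over all basic $D$ and using that the number of new sup-$D$-letters, summed with weights $\Abs{D}$, is $\Ocomp(\Abs{(U_0,V_0)})$ gives $H_d(U',V')\le \tfrac23 H_d(U,V) + \Ocomp(\Abs{(U_0,V_0)})$; feeding the same bound on $\abs{\sup D}$ into the concave function $h$ and using $h(\tfrac23 x + c)\le \tfrac56 h(x) + \Ocomp(c\log(c+1))$ for the convexity/near-linearity bookkeeping yields the matching estimate for $H_n$, and hence $H(U',V')\le \tfrac56 H(U,V) + \alpha\Abs{(U_0,V_0)}$.

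For the intermediate bound, I would note that within the phase $H$ can only grow through popping and extension, and before the constant number of halving steps is exhausted the sums \eqref{eq:variable weight}--\eqref{eq:depfactor count} are still only a constant multiple of their initial values, so the cumulative weight of all letters popped or depfactors extended during the whole phase is $\Ocomp(H(U,V) + \Abs{(U_0,V_0)})$; each pop or extension raises $\abs{\sup D}$ for one basic depfactor by one, so $H_d$ rises by at most $\Abs{D}$ and $H_n$ by at most $\Ocomp(\log(\abs{\sup D}+1))\le\Ocomp(\Abs{(U_0,V_0)})$ per event, giving an intermediate bound $\beta H(U,V) + \gamma\Abs{(U_0,V_0)}$ for suitable constants. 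I expect the main obstacle to be the expectation argument made rigorous: the strategy only halves each sum \emph{in expectation} over random partitions, so strictly speaking one should either argue that a partition achieving the halving exists deterministically (which is what the preceding lemma asserts: "can always choose a partition that at least halves ... a chosen non-zero sum"), and then carefully account for the fact that popping $\Ocomp(1)$ letters \emph{in expectation} is not literally $\Ocomp(1)$ letters — the fix being to run steps a, b, c, d to completion so that each sum is genuinely $\le\Abs{(U_0,V_0)}$ and then use the per-object $\Ocomp(1)$ bound from Lemma~\ref{lem:blocking} for the residual blocked objects, so that no expectation survives into the final inequality. Handling this interface cleanly, and tracking the constants through the concavity estimate for $h$, is where the real work lies; the rest is the routine regrouping already rehearsed in the preceding lemma.
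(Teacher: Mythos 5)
Your decomposition into $H_d$ and $H_n$, the use of Lemma~\ref{lem:shortening the strings} to get the factor $\tfrac23$ on the old sup-$D$-letters, and the geometric-series accounting of the total \emph{weight} of popped/extended letters via the halving of \eqref{eq:variable weight} and \eqref{eq:depfactor weight} all match the paper and are fine for the $H_d$ part. (One local misstatement: Lemma~\ref{lem:blocking} gives the ``at most one further pop/extension'' guarantee for \emph{blocked} objects, not for ``still-unblocked'' ones; an unblocked variable may pop one letter in every subsequent partition. The correct mechanism, which you also gesture at, is that the total over the whole phase is a geometric series $\sum_I(\text{sum at partition }I)$, not that everything is blocked after $O(1)$ steps.)

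The genuine gap is in $H_n$. Writing $c_D=1+p_D+e_D$ for the number of new sup-$D$-letters, your bookkeeping reduces the claim to $\sum_D h(c_D)=\Ocomp(\Abs{(U_0,V_0)})$, and you treat this as following from the same growth bounds. It does not: $h(x)=x\log(x+1)$ is convex (not concave, as you write), so from $\sum_D c_D=\Ocomp(\Abs{(U_0,V_0)})$ one only gets $\sum_D h(c_D)\le h\bigl(\sum_D c_D\bigr)=\Ocomp(\Abs{(U_0,V_0)}\log\Abs{(U_0,V_0)})$, which is superlinear when the popping concentrates on few depfactors. This is exactly why the strategy has the \emph{count} steps \eqref{eq:variable count} and \eqref{eq:depfactor count}, which your proposal lists but never uses. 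The paper's argument is: an occurrence that pops (or a depfactor that extends by) $p$ letters must have been unblocked for at least $p-4$ partitions, so $h(p_D)\lesssim 25+\tfrac{p_1(p_1+1)+p_2(p_2+1)}{2}$ is dominated by $\sum_i i\cdot(\text{number of unblocked objects at partition }I_i)$ (inequality \eqref{eq:log to square}); since the count sums at least halve every fourth partition, this arithmetico-geometric series $\sum_i i\,2^{-\lfloor i/4\rfloor}\Abs{(U_0,V_0)}$ converges to $\Ocomp(\Abs{(U_0,V_0)})$, yielding \eqref{eq:estimationpD} and \eqref{eq:estimationeD}. Without this weighting-by-$i$ argument (or some other way to rule out concentration of $p_D,e_D$ on few depfactors), the $H_n$ bound --- and hence the lemma --- does not follow from what you have written.
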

\begin{proof}
We separately estimate the $H_d$ and $H_n$.
Concerning $H_d$, let us first estimate \Abs{ \depint{\dep(p)}} summed over positions $p$ of letters
popped into the equation during a phase
(note, this does not include the size of numbers used in the encoding).
For each variable we pop perhaps several letters to the left and right before block compression,
but those letters are immediately replaced with single letters, so we count each as $1$;
also, when this side of a variable becomes blocked, it can pop at most one letter
(Lemma~\ref{lem:blocking}).
Otherwise, a side of a variable pops at most $1$ letter per pair compression,
in which it is unblocked from this side.
Note that the depint is the same as for the variable,
so the encoding size is \Abs{X}.
So in total the bit-size of popped letters is at most:
\begin{multline}
\label{eq:popping estimation}
\underbrace{\sum_{X \in \mathcal X} 2 n_X \cdot \Abs{X}}_{\text{block compression}}
+ 
\underbrace{\sum_{X \in \mathcal X} 2 n_X \cdot \Abs{X}}_{\text{after $X$ becomes blocked}}
+ \\ +
\sum_{P: \text{ partition}} \left(\sum_{\substack{X \in \variables\\ \text{left unblocked in $P$}}} n_X \cdot \Abs{X} +
\sum_{\substack{X \in \variables\\ \text{right unblocked in $P$}}} n_X \cdot \Abs{X}\right).
\end{multline}
Observe that the third sum (the one summed over all partitions)
at the beginning of the phase is equal to $\sum_{X} 2 n_X \cdot \Abs{X}$,
as no side of the variable is blocked,
and by the strategy point \eqref{eq:variable weight} its value at least halves every 4th pair compression
(and it cannot increase, as by Lemma~\ref{lem:blocking} no side of the variable can cease to be blocked).
Thus~\eqref{eq:popping estimation} is at most
\begin{equation*}
4 \sum_{X} n_X \cdot \Abs{X} + 8 \sum_{X}  n_X \cdot \Abs{X} \left( 1 + \frac{1}{2} + \frac{1}{4} + \cdots \right) = 
20\sum_{X} n_X \cdot \Abs{X} \leq 20\Abs{(U_0, V_0)} \enspace.
\end{equation*}
We now similarly estimate how many positions got into $\suppos(i)$
due to expansion of $\suppos(i)$:
$\suppos(i)$ can expand to two letters during the block compression
(to be more precise: to positions that are inside a block
and to neighbouring blocks to the left/right of the block,
but positions in a block are replaced with a single position and one of them was in $\suppos(i)$,
so there is no increase in the middle block)
to one position at each side after $i$ becomes blocked (Lemma~\ref{lem:blocking})
and by one position for each partition $P$ in which this side of $i$ is not blocked.
So the increase in the bit-size is
\begin{multline}
\label{eq:depint extending estimation}
\underbrace{\sum_{i: \text{ index}} 2 \Abs{\depint{i}}}_{\text{block compression}}
+
\underbrace{\sum_{i: \text{ index}} 2 \Abs{\depint{i}}}_{\text{after blocked}}
+ \\
+
\sum_{P: \text{ partition}}
\Big(\sum_{\substack{i \text{: index}\\\text{left unblocked in $P$ }}} \Abs{\depint{i}}
+
\sum_{\substack{i \text{: index}\\\text{right unblocked in $P$ }}} \Abs{\depint{i}}
\Big)
\end{multline}
and as in \eqref{eq:popping estimation} similarly at the beginning of the phase the third sum
(so the one summed by partitions)
is $\sum_{i: \text{ index}} 2 \Abs{\depint{i}} = 2 \Abs{(U_0,V_0)}$
and it at least halves every 4th partition, by strategy point~\eqref{eq:depint weight}.
Thus similar calculations show that~\eqref{eq:depint extending estimation} is at most $20\Abs{(U_0, V_0)}$.

On the other hand, the number of positions in $\suppos(i)$
drops till the end of the phase by at least $\frac{\abs{\suppos(i)}}{3}-1$
due to compression:
\begin{itemize}
	\item If $\depint{i}$ is a letter, then $\suppos(i)$
	are all positions of letters and Lemma~\ref{lem:shortening the strings}
	yields that $\suppos(i)$ looses at least $\frac{\abs{\suppos(i)}-1}{3}$ positions.
\item If $\depint{i}$ is an ending marker,
	then the marker itself is unchanged and the remaining
	positions in $\suppos(i)$ are letter-positions and Lemma~\ref{lem:shortening the strings}
	applies to them, so $\suppos(i)$ looses at least
	$\frac{\abs{\suppos(i)}-2}{3}<
	\frac{\abs{\suppos(i)}}{3}-1$ positions.
\item If $\depint{i}$ is a variable
	then $\suppos(i)$ includes the position of a variable and
	Lemma~\ref{lem:shortening the strings} applies to strings of letters to the left and right,
	say of length $\ell, r$, where $\ell + r = \abs{\suppos(i)}-1$.
	Then due to compressions $\suppos(i)$ looses at least
	$\frac{\ell - 1}{3} + \frac{r-1}{3} =
	\frac{\abs{\suppos(i)}}{3} -1 $ positions.
\end{itemize}
Thus:
\begin{align*}
H_d(U',V') &\leq
\underbrace{40 \Abs{(U_0, V_0)}}_{\text{new positions in depints}} + \underbrace{\sum_{i\text{: index}} \Abs{\depint{i}} \cdot \left(\frac{2}{3}\abs{\suppos(i)} +1\right)}_{\text{old positions removed}}
\\&=
40 \Abs{(U_0, V_0)} 
+\sum_{i\text{: index}} \frac{2}{3} \Abs{\depint{i}} \cdot \abs{\suppos(i)}
+ \sum_{i\text{: index}} \Abs{\depint{i}}
\\&=
41 \Abs{(U_0, V_0)} +  \frac{2}{3} H_d(U,V)\enspace.
\end{align*}

We also estimate the maximal value of $H_d$
during the phase,
as for intermediate equations we cannot guarantee that
the compression reduced the length of all letters.
We already showed that
in a phase we increase $H_d$ by $40 \Abs{(U_0, V_0)}$.
This yields a~bound of $H_d(U, V) + 40 \Abs{(U_0, V_0)}$,
which shows the part of the claim of Lemma for $H_d$.

Concerning $H_n$, for an index $i$ let $b_i, p_i, e_i$ denote, respectively:
$\abs{\suppos(i)}$ at the beginning of the phase,
number of positions of letters popped from a variable with depint $i$
and number of positions to which $\suppos(i)$ extended due to increase of depints.
To shorten the notation, let $h(x) = x \log(x+1)$.
First we estimate 
$\sum_{i: \text{ index} } h(p_i)$ and 
$\sum_{i: \text{ index} } h(e_i)$
and then use those estimations to calculate the bound on $H_n(U', V')$.
We first inspect the case of $p_i$;
let $P_1, P_2, \ldots$ denote the consecutive partitions in phase.
We show that
\begin{align}
\label{eq:log to square}
\sum_{\substack{i: \text{ index}}} h(p_i) \leq 
\sum_{X\in \variables} 25n_X + \sum_{m\geq 1} m \cdot 
\Big(\sum_{\substack{X \in \variables\\\text{left unblocked in $P_m$}}} n_X 
	+
\sum_{\substack{X \in \variables\\\text{right unblocked in $P_m$}}} n_X\Big) .
\end{align}
To see this consider one occurrence of $X$, let it popped $p_X$ letters.
Then it was not blocked on left/right side for $p_{X,\ell}$/$p_{X,r}$ partitions,
where $p_{X,\ell} + p_{X,r} \geq p_X - 4$,
as from each side we can pop once for block compression
(formally, a sequence is popped, but it is immediately replaced with a single letter),
once after the side becomes blocked and at most once for each pair compression in which the side is not blocked.
Then in right hand side of~\eqref{eq:log to square} the contribution from
one occurrence of $X$ is at least
\begin{align*}
25 + \sum_{i=1}^{p_{X,\ell}} 1 + \sum_{i=1}^{p_{X,r}} 1
	&=
25 + \frac{p_{X,\ell} (p_{X,\ell} + 1) + p_{X,r}(p_{X,r}+1)}{2}\\
	&=
25 + \frac{p_{X,\ell}^2 + p_{X,r}^2}{2} + \frac{p_{X,r} + p_{X,\ell}}{2}\\
	&\geq
25 + \frac{(p_{X,\ell} + p_{X,r})^2}{4} + \frac{p_{X,r} + p_{X,\ell}}{2}\\
	&\geq
\frac{(p_X-4)^2}{4} + \frac{p_X-4}{2} + 25\\ 
	&\geq
p_X \log (p_X+1) \enspace,
\end{align*}
where last inequality can be checked by simple numerical calculation.
Lastly, in~\eqref{eq:log to square} each $p_i$ is equal to an appropriate $p_X$.

The sum in braces on the right hand side of~\eqref{eq:log to square} initially is at most $2\abs{U_0V_0} \leq 2\Abs{(U_0,V_0)}$
and by strategy choice~\eqref{eq:variable count}
it is at least halved every 4th step. So this sum is at most:
\begin{align*}
\sum_{i \geq 0} \underbrace{(16i+10)}_{\text{4 consecutive steps}} \cdot \underbrace{2\Abs{(U_0,V_0)}}_{\text{initial size}} \cdot 
\left(\frac{1}{2}\right)^i
	&=
32\Abs{(U_0,V_0)} \underbrace{\sum_{i \geq 0} i \cdot \left(\frac{1}{2}\right)^i}_{=2}
	+
20\Abs{(U_0,V_0)} \underbrace{\sum_{i \geq 0} \left(\frac{1}{2}\right)^i}_{=2}\\
&= 104 \Abs{(U_0,V_0)}
\end{align*}
and consequently
\begin{equation}
\label{eq:estimationpD}
\sum_{i: \text{ index} } h(p_i) \leq 129 \Abs{(U_0,V_0)} \enspace.
\end{equation}

The analysis for $e_i$ is similar: for a single index $i$
the estimation on the number of positions to which $\suppos(i)$ extends
is the same as the estimation of number of letters popped from an occurrence of a variable, thus 
\begin{equation}
\label{eq:estimationeD}
\sum_{i: \text{ index} } h(e_i) \leq 129 \Abs{(U_0,V_0)} \enspace.
\end{equation}

We now estimate, how many positions were removed from $\suppos(i)$
due to compression,
recall that $b_i$ is the size of $\suppos(i)$ at the beginning of the phase.
Using the same analysis as in the case of $H_d$, from
Lemma~\ref{lem:shortening the strings} it follows that
at least $\frac{b_i}{3}-1$ positions were removed during the phase due to compression .
Thus
\begin{equation}
\label{eq:Dnumbersbitspace}
H_n(U', V') \leq \sum_{i: \text{ index} }
h\left(\frac{2}{3} b_i + 1 + p_i + e_i \right).
\end{equation}
Consider two subcases: if $\frac{2}{3} b_i + 1 + p_i + e_i \leq \frac{5}{6} b_i$
(which implies $b_i \geq 6$),
then the summand can be estimated as $h(\frac{5}{6} b_i) \leq \frac{5}{6} h(b_i)$
and we can upper bound the sum over those cases by 
$\frac{5}{6} \sum_{i: \text{ index} } h(b_i)$.
If $\frac{2}{3} b_i + 1 + p_i + e_i > \frac{5}{6} b_i$
then $1 + p_i + e_i > \frac{1}{6} b_i$ and so
$\frac{2}{3} b_i + 1 + p_i + e_i < 5 (1  + p_i + e_i)$. 
Thus \eqref{eq:Dnumbersbitspace} is upper-bounded by:
\begin{align*}
	H_n(U', V') &< \frac{5}{6} \sum_{i: \text{ index} } h(b_i)
	+
	\sum_{i: \text{ index} } h(5(1 + p_i + e_i)).
	\intertext{In the following, we estimate the second sum.
	As $h$ is convex, we get $h(x+y+z) \leq (h(3x) + h(3y) + h(3z))/3$
	by Jensens' inequality and so}
	\sum_{i: \text{ index} } h(5(1 + p_i + e_i))
		&\leq
	\frac 1 3 \sum_{i: \text{ index} } h(15) + h(15 p_i) + h(15 e_i) \enspace .
	\intertext{
	Consider $h(15 x)$ for natural $x$. If $x= 0 $ then $h(15 x)  = h(x) = 0$
	and othwerwise}
	h(15 x) &= 15 x \log(15 x + 1)\\
		&\leq
	15 x\log(15 (x+1))\\
		&<
	15 x  (4 + \log(x + 1))\\
		&\leq
	60 x\log(x+1)\\
		&=
	60h(x)
	\intertext{And so}
	\frac 1 3 \sum_{i: \text{ index} } h(15) + h(15 p_i) + h(15 e_i)
		&\leq
	\sum_{i: \text{ index} } 20 + 20 h(p_i) + 20h(e_i)\\
		&\leq
	20 \Abs{(U_0,V_0)}  + 2580 \Abs{(U_0,V_0)} + 2580 \Abs{(U_0,V_0)}\\
	&=
	5180 \Abs{(U_0,V_0)} \enspace,
	\intertext{and so plugging into the initial estimations we get}
H_n(U', V') &\leq \frac{5}{6} \sum_{i: \text{ index} } h(b_i)
	+
	5180 \Abs{(U_0,V_0)} \enspace ,
\end{align*}
as claimed.

We should estimate the maximal $H_n$ value
during the phase, as inside a phase we cannot guarantee that letters get compressed,
i.e.\ estimate
$\sum_{i: \text{ index} }
h\left(b_i + p_i + e_i \right)$.
Using similar calculation as in the case of~\eqref{eq:Dnumbersbitspace} and properties of $h$
we obtain:
\begin{align*}
	\sum_{i: \text{ index} }
	h\left(b_i + p_i + e_i \right)
	&\leq
	\frac 1 3 \sum_{i: \text{ index} }
	h(3b_i) + h(3p_i) + h(3e_i)
\intertext{similarly as before for $x = 0$ we have $h(3x) = h(x) = 0$ and for $x \geq 1$:}
	h(3 x) &= 3 x \log(3 x + 1)\\
		&\leq
	3 x\log(3 (x+1))\\
		&<
	3 x  (2 + \log(x + 1))\\
		&\leq
	9 x\log(2(x+1))\\
		&=
	9h(x)
\intertext{and so}	
	\frac 1 3 \sum_{i: \text{ index} }
	h(3b_i) + h(3p_i) + h(3e_i)
	&\leq
	3
	\sum_{i: \text{ index} }
	h(b_i) + h(p_i) + h(e_i)\\
	& \leq
	3 H_n(U,V) + 774 \Abs{(U_0,V_0)}	\enspace. \qedhere
\end{align*}
which shows the claim of the Lemma in the case of $H_n$ and so also in case of $H$.
\end{proof}

\subsection{Proof of Theorem~\ref{thm:linspace}}\label{subsec:wrap-up}
By Lemma~\ref{lem:sound and complete}
all our subprocedures are sound, so we never accept an unsatisfiable equation.

We now analyse the nondeterministic choices that yield termination, completeness and linear space consumption.
Consider an equation $U = V$ at the beginning of the phase,
let \letters be the set of letters in this equation.
If it has a solution $\solution'$,
then it also has a solution \solution over \letters such that \abs{\sol X}=\abs{\solution'(X)} for each variable:
we can replace letters outside \letters
with a fixed letter from \letters.
During the phase we will make nondeterministic choices according to this \solution.

Let the equation obtained at the end of the phase be $U' = V'$ and $\solution'$ be the corresponding solution.
Then $\abs{\solution'(U')} \leq \frac{2\abs{\sol U}+1}{3}$
by Lemma~\ref{lem:shortening the strings}
and we begin the next phase with $\solution'$;
if it has some letters that are not used in the equation,
then we switch to other solution $\solution''$ such that
$\abs{\solution''(U')} \leq \abs{\solution'(U')}$.
Hence we terminate after $\Ocomp(\log N)$ phases,
where $N$ is the length of some solution of the input equation;
it is known that there always is a solution which is at mostly doubly exponential~\cite{PlandowskiSTOC},
so we terminate after exponential number of phases.

Let the algorithm (nondeterministically) choose the partitions according to the strategy.
We show by induction that for an equation $(U, V)$
at the beginning of a phase $H(U,V) \leq \delta \Abs{(U_0, V_0)}$,
where $\delta$ is a constant.
Initially 
$H_n(U_0, V_0) = \Abs{(U_0, V_0)}$ and $H_d(U_0,V_0) = 4\Abs{(U_0, V_0)}$,
as for each index $\dep(i) = \{i\}$;
hence the claim holds.
By Lemma~\ref{lem:linear space} the inequality at the end of each
phase holds for $\delta = 6 \alpha$ for $\alpha$ from Lemma~\ref{lem:linear space}.
For intermediate equations $(U'',V'')$ it holds that 
$H(U'',V'') \leq (6 \alpha \gamma + \beta)\Abs{(U_0, V_0)}$,
by Lemma~\ref{lem:linear space},
where $\beta, \gamma$ are the constants from Lemma~\ref{lem:linear space}.

To upper-bound the space consumption, we also estimate other stored information:
we also store the alphabet from the beginning of the phase
(this is linear in the size of the equation at the beginning of the phase)
and the mapping of this alphabet to the current symbols
(linear in the equation at the beginning of the phase plus the size
of the current equation).
The terminating condition that some pair of letters in $\letters^2$
was not covered is guessed nondeterministically,
we do not store $\letters^2$.
The pair compression and block compression
can be performed in linear space, see Lemma~\ref{lem:performing subprocedures in linear space}.
Note that this includes the change of Huffman coding.

%\bibliographystyle{plain}
%\bibliography{references}

\end{document}